\documentclass[11pt]{article}
\usepackage[utf8]{inputenc}
\usepackage{amsmath,amsthm,amssymb,caption,subcaption,nicefrac}
\usepackage{fullpage}
\usepackage{bm}

\usepackage{mathtools}
\usepackage{xspace}
\usepackage{xcolor}
\usepackage{colortbl}
\usepackage{soul}
\usepackage{hyperref}
\hypersetup{
    colorlinks=true,
    linkcolor=olive,
    citecolor=purple
}
\usepackage{t1enc}
\usepackage[nameinlink]{cleveref}
\newcommand{\etal}{\emph{et~al.}}

\definecolor{cellcolor}{rgb}{1,0.8,0.8}
\newcommand{\cellcheck}{\cellcolor{cellcolor}\checkmark}

\DeclareMathOperator{\poly}{poly}
\DeclareMathOperator*{\argmax}{arg\,max}
\DeclareMathOperator{\score}{score}

\newcommand{\defeq}{\coloneq}
\newcommand{\opt}{\text{OPT}}
\newcommand{\temp}{\text{temp}}

\newcommand{\matrixapp}[2]{
  \begin{bmatrix}
    #1  \\
    #2      
  \end{bmatrix}
}

\newtheorem{theorem}{Theorem}
\newtheorem{lemma}[theorem]{Lemma}
\newtheorem{proposition}[theorem]{Proposition}
\newtheorem{corollary}[theorem]{Corollary}

\newtheorem{definition}[theorem]{Definition}

\newcommand{\probstyle}[1]{\textsc{#1}\xspace}
\newcommand{\wthiele}{\probstyle{$w$-Thiele}}
\newcommand{\thielearg}[1]{\probstyle{#1-Thiele}}
\newcommand{\genthiele}{\probstyle{Generalized Thiele}}
\newcommand{\pavfull}{\probstyle{Proportional Approval Voting}}
\newcommand{\pav}{\probstyle{PAV}}
\newcommand{\ccavfull}{\probstyle{Chamberlin-Courant Approval Voting}}
\newcommand{\cc}{\probstyle{CC}}
\newcommand{\avfull}{\probstyle{Multi-winner Approval Voting}}
\newcommand{\av}{\probstyle{AV}}
\newcommand{\ellcoverage}{\probstyle{$\ell$-Coverage}}

\title{Algorithms for Structured Elections under Thiele Voting Rules\footnote{A conference version of this work appeared in AAAI 2026~\cite{LassotaS2026_pavvi}.}}\author{
    Alexandra Lassota\\
    TU Eindhoven, the Netherlands\\
    {\small\texttt{a.a.lassota@tue.nl}}\\[-2pt]
    \and
    Krzysztof Sornat\\
    AGH University, Poland\\[-2pt]
    {\small\texttt{sornat@agh.edu.pl}}
}
\date{}

\begin{document}

\maketitle

\begin{abstract}
    We study the computational complexity of winner determination problems in approval-based committee elections under Thiele voting rules.
    These form a class of rules parameterized by a fixed weight vector that specifies how a voter's satisfaction depends on the number of approved candidates elected.
    We first analyze the structure of optimal solutions based on the sets of voters who approve each candidate---that is, how voters' approval ballots induce dependencies between candidates---revealing constraints on a winning committee under any fixed Thiele voting rule.
    Using this, we design FPT algorithms for Proportional Approval Voting (PAV) and other Thiele rules on a natural restricted domain known as the Voter Interval (VI) domain---that is, after a suitable ordering of voters, each candidate is approved by a consecutive interval of voters.
    In particular, we show that every Thiele rule on VI is FPT with respect to a parameter for which the problem is NP-hard on general instances, even when the parameter takes constant values.
    Our results advance the understanding of the computational complexity of PAV on Voter Interval instances, which remains one of the central open questions in this area.
    We further resolve two open questions from the literature on PAV (and other Thiele voting rules)
    by providing a polynomial-time algorithm for instances where each candidate is approved by at most two voters, and an FPT algorithm parameterized by the total score of a winning committee.
\end{abstract}

\section{Introduction}

Multi-winner elections based on approval ballots are used in many settings, such as recommendation systems, committee selection, and blockchain~\cite{SkowronFL16_aij_set_of_items,lackner23abc_book,BoehmerBCG0S24_polkadot}, where
every voter expresses its preferences as a subset of candidates it approves of.
A central family of voting rules for such applications are Thiele rules~\cite{Thiele95},
which include Proportional Approval Voting (PAV) as a prominent member~\cite{AzizGGMMW15}.
These rules balance diversity, proportionality, and excellence in the selected committee, depending on the specific Thiele rule used~\cite{LacknerS21}.
For instance, Chamberlin-Courant Approval Voting (CC)~\cite{ChamberlinCourant83} promotes diversity by ensuring broad representation; PAV aims for proportional representation, satisfying strong proportionality axioms like EJR+~\cite{BrillP23_ejrp};
Multi-winner Approval Voting focuses on excellence by selecting the most approved candidates.
However, computing a winning committee under Thiele rules is computationally challenging.
In fact, finding a winning committee under any non-trivial Thiele rule
(i.e., every rule except for Multi-winner Approval Voting)
is NP-hard.
For CC and PAV, the problems remain NP-hard even under very restricted conditions where each candidate is approved by exactly 3 voters and each voter approves exactly 2 candidates~\cite{AzizGGMMW15,SkowronFL16_aij_set_of_items}.

To better understand and overcome this computational hardness,
a natural direction is to restrict the input domain.
Such restrictions often enable polynomial-time algorithms for otherwise intractable rules,
and this approach has been particularly fruitful in approval-based committee elections.
Two central restricted domains for approval ballots are the \emph{Candidate Interval (CI)} and the \emph{Voter Interval (VI)} domains~\cite{ElkindLP2017trends_book,ElkindLP25_arxiv_preference_restrictions}.

In the CI domain, candidates can be ordered so that each voter's approval set forms a contiguous interval.
CI captures scenarios where candidates are linearly ordered, e.g. by ideology or location,
and each voter is focused on a specific region of the spectrum.
One key advantage of CI is that it allows the winner determination problem for all Thiele rules to be solved efficiently.
It is by using an integer linear programming formulation that admits a totally unimodular constraint matrix,
which are known to be solvable in polynomial time~\cite{Peters18_aaai,PetersL20_spoc_jair}.

The VI domain, in contrast, imposes structure on the voters rather than on the candidates.
Here, voters can be ordered so that each candidate is approved by a consecutive segment of voters.
This domain models scenarios where voters are structured by demographic or socioeconomic features such as age, income, or education level,
and candidates appeal to specific groups, e.g. young voters, low-income households, or university-educated individuals.
The VI domain has received significant attention~\cite{ElkindLP2017trends_book,ElkindLP25_arxiv_preference_restrictions}
and, consequently, for several voting rules, such as CC, Monroe's and Minimax Approval Voting
polynomial-time winner-determination algorithms for VI elections are known~\cite{BetzlerSU13_jair,LiuG16_mav_vi}.

Yet, for most Thiele rules, and PAV in particular, the computational complexity of winner determination on VI remains a prominent open question~\cite{ElkindLP2017trends_book,Peters18_aaai,GodziszewskiB0F21_vci_aaai21,lackner23abc_book,ElkindLP25_arxiv_preference_restrictions}.
Despite the dual nature of the VI and CI domains, the techniques used for CI do not translate to VI.
In particular, the constraint matrix of the integer linear programming formulation is not totally unimodular anymore under VI preferences and, thus,
it is not clear whether this problem can be solved in polynomial time or not. 
This motivates a search for new structural results and algorithmic techniques.

\paragraph{Structural Results.}
This work takes a new approach to understand the computational complexity of Thiele rules
by studying the structure of winning committees.
Instead of considering candidates in isolation,
we examine how shared support among voters constrains the possible combinations of candidates in a winning committee.
We provide a structural characterization of optimal committees under Thiele rules based on a \emph{dominance} relation among candidates.
Specifically, a candidate $c$ is said to dominate another candidate $d$ if the set of supporters of $c$ strictly contains that of $d$.
This induces a hierarchy of \emph{dominancy levels} where candidates within the same level do not dominate one another and each is dominated by some candidate in a higher level.
We show that there always exists a winning committee that is \emph{non-dominated},
meaning that no member is dominated by any candidate outside the committee.
These structural insights apply to general approval profiles and are of independent interest, offering new theoretical tools for analyzing and determining winners under Thiele rules.
Practically, this structure can guide the development of more efficient algorithms and heuristics.
Theoretically, it offers a new perspective on the open question of whether winner determination under PAV is polynomial-time solvable on VI.

\paragraph{Algorithmic Results.}
We show that our structural insights are particularly effective when combined with the VI property:
candidates of a VI instance can be partitioned into parts such that candidates within a part influence only a limited number of voters.
This allows to design a dynamic program over a sequence of such parts, exploiting the limited interaction.
As a result, we obtain an FPT algorithm for PAV and, more generally, for any Thiele rule, when parameterized by two parameters combined:
the maximum number of approvals received by a candidate, and
the maximum number of approvals in a vote.
Notably, the same parameterization is para-NP-hard in the general (unstructured) case---that is, the problem remains NP-hard even when both parameters are constants.
This contrast highlights the algorithmic power of our structural approach and provides substantial progress toward resolving the open question on the complexity of PAV in the VI domain~\cite{ElkindLP2017trends_book,Peters18_aaai,lackner23abc_book,ElkindLP25_arxiv_preference_restrictions}.

Beyond the VI setting, we further contribute to understanding tractable cases for Thiele rules by resolving two open problems posed by Yang and Wang~\cite{YangW18-aamas18,YangW23-jaamas}, originally asked for PAV:
(1) we provide a polynomial-time algorithm for instances where each candidate is approved by at most two voters, based on a proper integer linear programming formulation; and
(2) we give an FPT algorithm parameterized by the total score of a winning committee, employing combinatorial tools like color-coding and splitters.
It is particularly interesting as the score may not be integer.
Moreover, this parameter can be smaller than the number of voters for which FPT algorithms exist.
Importantly, our results extend to all Thiele rules, not just PAV, demonstrating the generality of our approach.

\paragraph{Structure of the Paper.}
\Cref{sec:preliminaries} introduces notation for the approval-based committee election model, structured domains and voting rules studied in this paper.
\Cref{sec:structure} presents our structural results on winning committees under Thiele rules.
\Cref{sec:fpt-vi}, using the structural results, develops an FPT algorithm on VI instances.
In \Cref{sec:general-instances}, we give an FPT algorithm parameterized by the total score of the optimal committee, and a polynomial-time algorithm for instances where each candidate is approved by at most two voters.

\subsection{Related Work}\label{subsec:relatedwork}
Recently, another FPT algorithm parameterized by the total score of a winning committee was provided independently by Gupta, Jain, Saha, Saurabh and Upasana~\cite{GuptaJSSU2025_independent_fpt}.
Even though they also use color-coding at the heart of their algorithm, the approaches differ significantly: while they color candidates and voters, we color the approvals of the voters.
This results in a different number, type, and meaning of guesses, as well as a different construction of the overall solution.
It also results in differences in running times.
While the algorithms outperform each other in some cases, our algorithm runs in time truly linear in the number of voters, which we see as a strong advantage.
For more detailed discussion about differences between both algorithms, as well as quantitative comparison of running times, we defer to \Cref{sec:fpt-d}.

There is also a substantial body of work on tractability of voting rules under restricted domains in the ordinal setting (in which voters cast votes in a form of linear orders over candidates),
such as Single-Peaked (SP)
and Single-Crossing (SC) preferences.
These domains are the ordinal counterparts of CI and VI, respectively,
and have led to efficient algorithms for many voting rules, including CC (in its general version defined on cardinal values of misrepresentation) and Kemeny.
In particular, a classic dynamic programming algorithm was proposed for CC by Betzler \etal~\cite{BetzlerSU13_jair}.
More recently, an algorithm with near linear-time in the input size has been developed for the case where
the SC-axis (ordering of voters) 
or the SP-axis (ordering of candidates) is explicitly given~\cite{ConstantinescuE21_sc_linear,SornatWX22_ijcai}.

These works illustrate the general principle that structural properties of preferences can be algorithmically exploited.
For more information about these and other restricted domains see, e.g., the works of Elkind, Lackner and Peters~\cite{ElkindLP2017trends_book,ElkindLP25_arxiv_preference_restrictions}.

\paragraph{Parameterized Complexity of Thiele Rules.}
For an overview of the parameterized complexity of Thiele rules,
we refer to the recent work by Yang and Wang~\cite{YangW23-jaamas},
who provide a comprehensive summary of known results (see Table~1 therein),
along with new findings for several structural parameters and their combinations.
Below, we briefly discuss the main results for standard parameters such as the number of voters~$n$, the number of candidates~$m$, and the committee size~$k$, with an emphasis on the techniques used and known lower bounds. 

A trivial brute-force algorithm checking total scores of all size-$k$ subsets of candidates runs in FPT time with respect to $m$, namely $O^*(2^m)$.
This is essentially optimal for every non-constant Thiele rule, as under the Exponential Time Hypothesis there is no $O^*(2^{o(m)})$-time algorithm for this problem~\cite{SornatWX22_ijcai}.

A mixed integer linear program (MILP) presented by Faliszewski \etal~\cite[Fig.~2]{FaliszewskiSST18_scw}
implies an FPT algorithm with respect to $n$ for every Thiele rule.\footnote{
    Even though ordinal ballots and top-$k$-counting rules are considered in the MILP~\cite[Fig.~2]{FaliszewskiSST18_scw}, it is enough to replace $g_{m,k}(j)$ with $w_j$, where $w_j$ is the $j$-th element of the Thiele sequence $w$, and to adjust the definition of $\mathcal{T}(S_i)$ to the set of candidates approved by voters from $S_i$ and no voter from $V \setminus S_i$.
}
The main idea is to define an integer variable for each candidate type (defined by its set of supporters; hence, there are at most $2^n$ types), which encodes how many candidates of a particular type are selected for the solution.
Non-integral variables are forced to take integral values in the optimal solution, as first used by Bredereck \etal~\cite{BredereckFNST15_adt}.
The resulting running time is double-exponential, namely $O^*(2^{2^{O(n)}})$~\cite{BredereckF0KN20_aaai}.
A single-exponential lower bound of $O^*(2^{o(n)})$ under the Exponential Time Hypothesis is known~\cite{SornatWX22_ijcai},
hence, there is still a significant gap remaining.
A similar MILP idea has been applied in many contexts, e.g., for extensions of Thiele rules~\cite{JainST20_ijcai,YangW23-jaamas},
as well as for problems related to bribery and control in elections~\cite{BredereckFNST20_tcs}.

Every non-constant Thiele rule is also W[1]-hard with respect to the committee size~\cite{AzizGGMMW15,JainST20_ijcai,SornatWX22_ijcai}.

\paragraph{Approximability of Thiele Rules.}
Thiele rules have also been studied from the perspective of approximability,
where the goal is to compute a committee whose total score is close to optimal~\cite{SkowronFL16_aij_set_of_items,ByrkaSS18_hkm}.
For a broad class of Thiele rules, tight polynomial-time approximation algorithms have been established~\cite{DudyczMMS20_ijcai,BarmanFF21_concave_coverage,BarmanFGG22_lcoverage}.

\section{Preliminaries}\label{sec:preliminaries}

We are given a set of candidates $C = \{c_1, c_2, \dots, c_m\}$ and a set of voters $V = \{v_1, v_2, \dots, v_n\}$.
Each voter $v \in V$ expresses its preference in the form of an \emph{approval set} $A_v \subseteq C$, and the collection $A = (A_v)_{v \in V}$ is referred to as an \emph{approval profile}.
Any subset of $C$ is called a \emph{committee}.
We write $\mathcal{W}_k \defeq \{W \subseteq C: |W| = k \}$ to denote the set of all committees of size $k$.
For a candidate $c \in C$, we denote by $V_c \defeq \{ v \in V: c \in A_v\}$ the \emph{set of supporters} of $c$.
We extend the notation to sets of candidates $C' \subseteq C$, i.e., $V_{C'} \defeq \bigcup_{c \in C'} V_c$.
We define $\Delta_C \defeq \max_{c \in C} |V_c|$ as the maximum number of approvals given to a candidate
and $\Delta_V \defeq \max_{v \in V} |A_v|$ as the maximum number of approved candidates by a voter.
An \emph{approval-based committee (ABC) election} is a tuple $E=(C,V,A,k)$.
A voting rule is a function taking an election as an input and outputs a set of \emph{winning committees} of size $k$.

The $w$-Thiele voting rule~\cite{Thiele95} is parameterized by a non-increasing infinite sequence $w = (w_1, w_2, \dots)$, called a \emph{Thiele sequence}.
Given an election $(C,V,A,k)$,
a committee $W \in \mathcal{W}_k$ is \emph{optimal} under the $w$-Thiele rule if it maximizes the \emph{total score}:
\begin{align*}
    d \defeq \score_w(W) = \sum_{v \in V} \sum_{i = 1}^{|A_v \cap W|} w_i,
\end{align*}
over all committees of size $k$.
The $w$-Thiele rule returns the set of all such optimal committees.
The corresponding computational problem \wthiele requires outputting a single optimal committee.
Its decision variant asks whether there exists a committee of size $k$ with a total score at least a given value.

All Thiele rules considered in the literature are defined with $w_1 = 1$ (see examples below).
Dividing $w$ by $w_1$ and obtaining $w_1=1$ does not affect either its set of optimal solutions or its approximability~\cite{DudyczMMS20_ijcai},
but it affects the parameter $d$, the total score of an optimal solution that is studied in this paper.
In particular, if we allow $w_1 \leq \frac{1}{k}$, then $d \leq \Delta_C \cdot k \cdot \frac{1}{k} = \Delta_C$, but every \wthiele with non-constant~$w$ is NP-hard even if $\Delta_C = 3$~\cite{AzizGGMMW15,SkowronFL16_aij_set_of_items},
so this would imply paraNP-hardness with respect to $d$.
Therefore, we use the standard normalization $w_1 = 1$ in this paper.
(However any algorithm with running time dependent on $d$
after such normalization becomes an algorithm with running time dependent on $d + \frac{1}{w_1}$.)

Arguably, three most prominent $w$-Thiele rules are:
\begin{itemize}
    \item \ccavfull (\cc), that is \wthiele with $w = (1,0,0,\dots)$.
    \item \pavfull (\pav), that is \wthiele with $w_j = \frac{1}{j}$.
    \item \avfull (\av), that is \wthiele with $w = (1,1,1,\dots)$.
\end{itemize}
An interpolation between CC and AV is \ellcoverage
that is \wthiele with $w_j = 1$ for $j \leq \ell$ and $w_j = 0$ otherwise~\cite{BarmanFGG22_lcoverage}.

\genthiele is a voting rule \cite{SornatWX22_ijcai}
which takes as an input an election $(C,V,A,k)$ and $n$ Thiele sequences represented as $w \colon V \times \mathbb{N} \rightarrow [0,1]$, where $(w^v_i)_{i \in \mathbb{N}}$ is a Thiele sequence for a voter $v \in V$, and outputs a committee $W \in \mathcal{W}_k$ which maximizes the total score:
$
    d \defeq \score_w(W) = \sum_{v \in V} \sum_{i = 1}^{|A_i \cap W|} w^v_i,
$
over all committees of size $k$.
Naturally, \genthiele where $w^{v} = w^{v'}$ for $v,v' \in V$ is equivalent to \thielearg{$w^v$}.

\begin{definition}[Voter Interval]
    An approval profile $A$ has Voter Interval (VI) property if there exists a linear order of voters such that for every candidate $c \in C$, the set $V_c$ is an interval on the linear order.
\end{definition}

An analogous restricted domain is defined for ordering of candidates.
\begin{definition}[Candidate Interval]
    An approval profile $A$ has Candidate Interval (CI) property if there exists a linear order of candidates such that for every voter $v \in V$, the set $A_v$ is an interval on the linear order.
\end{definition}
If an approval profile is VI, then a corresponding ordering of voters---called a \emph{VI-axis}---can be found in polynomial time;
thus, we assume w.l.o.g. that the ordering is $(v_1, v_2, \dots v_n)$.
An analogous ordering for CI profiles can be found also in polynomial time~\cite{FaliszewskiHHR11,ElkindL15_ijcai15}.

We write $[n] = \{1,2,\dots,n\}$ and adopt standard notation from computational and parameterized complexity theory~\cite{CyganFKLMPPS15_fpt_book}.
In particular, a decision problem parameterized by $k$ is \emph{fixed parameter tractable} (FPT) with respect to $k$
if it can be solved in time $f(k) \cdot \poly(|\mathcal{I}|)$ for any instance $(\mathcal{I},k)$,
where $f$ is a computable function and $|\mathcal{I}|$ denotes the input size.
A problem solvable in time $|\mathcal{I}|^{f(k)}$ belongs to the class XP, which implies it is solvable in polynomial time for any fixed value of $k$.
The notation $O^*(\cdot)$ suppresses factors polynomial in the input size.

\section{Structure of Winning Committees}\label{sec:structure}

For a given approval profile $A$, we create the \emph{dominancy graph} of $A$ where vertices correspond to candidates and a directed edge from $c$ to $c'$ exists if $V_{c'} \subset V_c$, i.e., a candidate $c'$ is dominated by $c$ (all supporters of $c'$ are also supporters of $c$, and $c$ has a supporter not supporting $c'$).
This dominancy relation coincides with the one recently introduced independently by Dong \etal~\cite{DongBWBE25_aamas25}.

Non-dominated candidates are those candidates that have no incoming edge in the dominancy graph, we denote this set as $L_1$.
We partition $C$ into \emph{dominancy levels} $L_1, \dots, L_\delta$ such that every candidate $c \in L_i$ is at distance exactly $i$ in $G$ from some candidate in $L_1$ and there is no candidate $c' \in L_1$ such that the distance in the dominancy graph between $c'$ and $c$ is strictly smaller than $i$.
We denote by $\delta$ the \emph{depth} of a dominancy graph.
By the definition of dominancy relation involving strict inclusion, we have $\delta \leq \Delta_C+1$.

\begin{definition}
  A committee $W \subseteq C$ is \emph{non-dominated} if every candidate belonging to any directed path from an element of $L_1$ to an element of $W$ also belongs to $W$.
\end{definition}

The next theorem characterizes optimal solutions to \genthiele with respect to non-dominancy.

\begin{theorem}\label{thm:opt-non-dominated}
    There exists an optimal solution to \genthiele that is non-dominated.
    Furthermore,
    if $w_i^v > w_{i+1}^v$ for all $v \in V$ and all $i \in \mathbb{N}$,
    then every optimal solution to \genthiele must be non-dominated.
\end{theorem}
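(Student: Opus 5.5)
The plan is a swap argument: if a committee is dominated, replace a dominated member by a candidate that dominates it, and check that the score does not drop. First I reduce non-dominance to a local condition. Suppose $W$ is not non-dominated. Then there is a directed path $c_1 \to c_2 \to \dots \to c_t$ in the dominancy graph with $c_1 \in L_1$, $c_t \in W$, and some $c_j \notin W$. Take the largest index $j$ with $c_j \notin W$. Then $c_{j+1} \in W$ and $V_{c_{j+1}} \subset V_{c_j}$. Hence $W$ is dominated if and only if there are $d \in W$ and $c \notin W$ with $V_d \subsetneq V_c$. (The converse direction is not needed.)

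\textbf{The swap.} Let $W' = (W \setminus \{d\}) \cup \{c\}$, which again has size $k$. For each voter $v$, let $a_v = |A_v \cap W|$. I split the voters into three classes.
\begin{itemize}
  \item Voters in $V_d$ also lie in $V_c$, so they lose $d$ and gain $c$. Their count is unchanged.
  \item Voters in $V_c \setminus V_d$ gain one approved member, so their score rises by $w^v_{a_v+1} \ge 0$.
  \item All other voters are unaffected.
\end{itemize}
Therefore $\score_w(W') - \score_w(W) = \sum_{v \in V_c \setminus V_d} w^v_{a_v+1} \ge 0$, and $W'$ is optimal whenever $W$ is.

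\textbf{First claim.} Repeating the swap need not terminate obviously, so I need a potential function. I would use $\Phi(W) = \sum_{c \in W} |V_c|$. Each swap strictly increases $\Phi$, because $|V_c| > |V_d|$. Since $\Phi$ is bounded by $k n$, the process ends. It ends at an optimal committee with no pair $d \in W$, $c \notin W$, $V_d \subsetneq V_c$, and by the reduction above this committee is non-dominated. Choosing from the start an optimal $W$ that maximizes $\Phi$ gives the same conclusion with no iteration.

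\textbf{Second claim.} Assume $w^v_i > w^v_{i+1}$ for all $v$ and $i$. Since weights lie in $[0,1]$, the sequences are nonnegative, and strict decrease then forces every $w^v_i > 0$, since $w^v_i > w^v_{i+1} \ge 0$. Now take any dominated optimal $W$ and apply one swap. The set $V_c \setminus V_d$ is nonempty by strict inclusion, so the score strictly increases. This contradicts optimality, so every optimal committee is non-dominated.

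The only delicate point I expect is this positivity of the weights, which needs nonnegativity (given by the codomain $[0,1]$) together with strict decrease. The reduction of path-based non-dominance to a single edge is routine.
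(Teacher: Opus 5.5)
Your proof is correct. It uses the same basic exchange idea as the paper: moving committee membership up the dominance order never lowers any voter's count. The decomposition, however, is different.

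\textbf{The paper's route.} It takes a maximum-length path $P$ from $L_1$ to $W$ that contains a non-member. In one step it replaces $W \cap P$ by the first $|W \cap P|$ vertices of $P$. Since supporter sets shrink strictly along $P$, no voter's count decreases. Termination is only sketched, by a layer-by-layer count (``after at most $k^k$ steps'').

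\textbf{Your route.} You first reduce non-dominance to a local condition: no edge $c \to d$ with $c \notin W$ and $d \in W$, found via the last non-member on a violating path. You then perform a single one-for-one swap whose score change is exactly $\sum_{v \in V_c \setminus V_d} w^v_{a_v+1}$. Termination follows from the potential $\Phi(W)=\sum_{c \in W}|V_c|$, or without any iteration by choosing an optimal committee that maximizes $\Phi$.

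\textbf{What your route buys.} It is more rigorous at the two points where the paper is terse. First, $\Phi$ gives a clean termination proof. The same potential also shows the paper's path shifts terminate, which is not obvious from the layered count, since a shift can delete a member that was dominating another member of $W$ and so re-create a violation. Second, your explicit difference formula makes the strict gain in the second claim transparent. This includes the observation that strict decrease together with the codomain $[0,1]$ forces every weight to be positive, which the paper uses only implicitly. The paper's path shift repairs a whole chain at once, but this gains nothing essential for this theorem.
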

\begin{proof}
    Let $E$ be an instance of \genthiele.
    Let $W$ be an optimal solution to \genthiele on $E$.
    We can modify $W$ until it becomes non-dominated:
    considering a directed path $P = (p_1, \dots, p_{|P|})$ of maximum length such that $p_1 \in L_1, p_{|P|} \in W$ and $p_i \notin W$ for some $i \in \{1, \dots,|P|-1 \}$, we replace $W$ with $(W \setminus P) \cup \{p_i: i \in \{1, \dots, |W \cap P| \}$ which has the same cardinality as $W$. We repeat this procedure until $W$ is non-dominated. 
    
    This must happen after a finite number of steps, as by considering paths $P$ of maximum length $\ell \in [\delta]$, after at most $k$ steps (considering at most all elements of $W$), every candidate from $W$ being at distance $\ell$ from $L_1$ is dominated only by candidates from $W$. By continuing the procedure over a decreasing index of dependency layers, after at most $k^k$ steps, every path from $L_1$ to $W$ is included in $W$.

    Observe that the final $W$ has a total score at least as high as the initial optimal committee because each replacement of candidates in the procedure does not decrease the score of any voter as each voter gets at least the same amount of candidates approved per definition of the dominancy graph.

    The second claim holds, as if any optimal solution $W$ is dominated, then by making a replacement of candidates on a path $P$ from $L_1$ to $W$, such that $P \setminus W \neq \emptyset$ as done in the procedure, we produce a new solution of size $k$ with strictly higher total score which would be a contradiction to the optimality of $W$.
\end{proof}

\Cref{thm:opt-non-dominated} has several useful implications.
To solve \genthiele, it suffices to consider only the first $k$ dominancy levels.
Indeed, for any $c \in L_{k+1} \cup \dots \cup L_\delta$ and any size-$k$ committee containing $c$,
there exists a candidate $c' \in L_1 \cup \dots \cup L_k$ outside the committee that dominates $c$.

\begin{proposition}\label{prop:opt-up-to-level-k}
    There exists an optimal solution to \genthiele
    that is a subset of
    $L_1 \cup \dots \cup L_k$.
\end{proposition}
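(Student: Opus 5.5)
We apply \Cref{thm:opt-non-dominated} to fix an optimal committee $W$ to \genthiele that is non-dominated, and then show that $W \subseteq L_1 \cup \dots \cup L_k$. Suppose, for contradiction, that $W$ contains a candidate $c \in L_j$ with $j \geq k+1$.

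By the definition of the dominancy levels, $c$ is at distance exactly $j$ from some candidate of $L_1$, and no candidate of $L_1$ is closer. Take a shortest directed path $P = (p_1, p_2, \dots, p_{j+1})$ in the dominancy graph with $p_1 \in L_1$ and $p_{j+1} = c$. Under the paper's indexing (distance $i$ corresponds to level $L_i$), the path has $j+1 \geq k+2$ vertices, because each edge is one step of distance. A shortest path is simple, so these vertices are pairwise distinct. If the intended convention is instead that $L_1$ sits at distance $1$, the path has $j \ge k+1$ distinct vertices; either way it has at least $k+1$ vertices.

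Since $W$ is non-dominated, every candidate on a directed path from an element of $L_1$ to an element of $W$ belongs to $W$. Hence all vertices of $P$ lie in $W$, which gives $|W| \geq k+1$. This contradicts $|W| = k$. Therefore $W$ contains no candidate from $L_{k+1} \cup \dots \cup L_\delta$, and $W \subseteq L_1 \cup \dots \cup L_k$ is an optimal solution, as claimed.

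The only delicate step is the off-by-one bookkeeping between the level index and the number of vertices on the path. The proof needs a path from $L_1$ to $c$ with at least $k+1$ distinct vertices. The stated definition puts $c\in L_i$ at distance $i$, which actually gives $i+1$ vertices, so the bound holds with room to spare. Under the natural alternative convention (level $i$ means $i$ vertices on the path), it holds exactly, since $j \ge k+1$. The distinctness of the vertices follows from the minimality of the path; acyclicity of the strict-inclusion relation guarantees it as well.
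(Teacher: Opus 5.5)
Your proof is correct and is essentially the paper's own argument. The paper also takes a non-dominated optimal committee from Theorem~\ref{thm:opt-non-dominated} and observes that a candidate in $L_{k+1}\cup\dots\cup L_\delta$ ends a path from $L_1$ with at least $k+1$ distinct candidates, so a size-$k$ committee containing it must omit some dominating candidate of $L_1\cup\dots\cup L_k$ on that path, which contradicts non-dominatedness (your treatment of the off-by-one in the level convention is fine, since either reading gives at least $k+1$ vertices).
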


For specific Thiele rules, the instances might be even further restricted.
For example, in \ellcoverage, the score of a voter $v$ from committee $W$ equals $\min\{ |A_v \cap W|, \ell \}$.
This means that a voter can receive score at most $\ell$, which implies that only dominancy levels $L_1, \dots , L_\ell$ are of interest because removing candidates from levels $L_{\ell+1}, \dots , L_\delta$ from a solution does not change its total score.
In the case $|L_1 \cup \dots \cup L_\ell| < k$,
by taking all candidates from $L_1 \cup \dots \cup L_\ell$ to the committee, we obtain a solution with the total score equal to $\score_w(C)$ (we fill the remaining seats in the committee with arbitrary candidates).
Hence, in the case of \ellcoverage, we may assume w.l.o.g.\ that $\delta \leq \ell$.
In general, the above discussion implies the following.
\begin{proposition}
    There exists an optimal solution to \genthiele that is non-dominated and that is either a subset or a superset of
    $L_1 \cup \dots \cup L_\ell$,
    where
    $\ell = \argmax_{i \in \mathbb{N}}
    \left(
      \exists_{v \in V} (w_i^v > 0)
    \right)$.
\end{proposition}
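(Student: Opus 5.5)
We start from an optimal non-dominated committee $W$, which exists by \Cref{thm:opt-non-dominated}. Let $\ell$ be the largest index with $w_\ell^v>0$ for some voter $v$, and set $U \defeq L_1 \cup \dots \cup L_\ell$. (If no such finite $\ell$ exists, $U$ is understood as all of $C$ and every committee is a subset of it, so there is nothing to prove.) We split into two cases according to $|U|$ versus $k$, and in each case modify $W$ without decreasing the score while keeping non-dominancy.

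The key observation is that every voter gains nothing from approved members beyond $\ell$. Take any $c \in L_j$ with $j > \ell$. There is a dominating chain $c_1 \to c_2 \to \dots \to c_j = c$ with $c_i \in L_i$, so $V_{c_1} \supset \dots \supset V_{c_j}$. Hence each supporter of $c$ also approves $c_1,\dots,c_\ell \in U$.

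\textbf{Case $|U| \ge k$.} Suppose $W \not\subseteq U$, and pick $c \in W \cap L_j$ with $j>\ell$. Since $W$ is non-dominated, the whole chain above $c$ lies in $W$. So each supporter $v$ of $c$ already has at least $\ell$ approved members other than $c$. Removing $c$ therefore changes $v$'s score by $w^v_{t}$ with $t>\ell$, which is $0$. Thus removing every member of $W\setminus U$ keeps the score the same. The remaining set $W\cap U$ is still non-dominated, because any path from $L_1$ to an element of $L_i$ with $i\le \ell$ can be chosen inside $U$.

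We then refill with $|W\setminus U|$ candidates from $U\setminus W$, which is possible since $|U|\ge k$. The scores are non-negative, so the score does not drop. To keep non-dominancy, we add candidates level by level: first from $L_1$, then from $L_2$, and so on, always choosing a candidate all of whose predecessors are already in the committee. The main obstacle is to ensure such a candidate always exists. Adding all of $L_1$, then all of $L_2$, and so on would work, but the paths in the definition may be arbitrary paths in the dominancy graph, not just shortest ones. For this reason we use the transitive structure of the graph: if $V_{c'}\subset V_c$ and $c'\in L_i$, then $c$ lies in a level below $i$. An element all of whose ancestors are in the current set therefore always exists, by taking an element of minimal level outside the set. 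Repeating this refill gives a non-dominated committee of size $k$ contained in $U$ with score at least $\score_w(W)$.

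\textbf{Case $|U|<k$.} Let $W' \defeq U \cup (\text{extra})$. Any committee's score is determined by the counts $\min\{|A_v\cap W|,\ell\}$, which are at most $\min\{|A_v\cap C|,\ell\}$. Including all of $U$ attains this maximum, since each voter's approved candidates include a chain through the levels as argued above. So $\score_w(U)=\score_w(C)$ and $U$ is optimal. We fill the remaining seats by the same minimal-level procedure applied to $C\setminus U$, which preserves non-dominancy. This yields a non-dominated optimal superset of $U$.
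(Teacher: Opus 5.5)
Your proposal is correct and follows the same route as the paper: in a non-dominated optimum, every supporter of a member beyond level $\ell$ already approves a chain of $\ell$ members from $L_1 \cup \dots \cup L_\ell$, so such members can be discarded; and when $|L_1 \cup \dots \cup L_\ell| < k$, taking all of these levels already attains $\score_w(C)$. You are more careful than the paper, which fills the remaining seats arbitrarily without checking non-dominancy. Note, though, that the property you rely on for this comes from reading the levels as longest-chain depth, not from transitivity of the dominancy graph alone. That property is that every strict dominator of a candidate lies in a strictly lower level, and it is what guarantees that every path into $W \cap U$ stays inside $U$ and that minimal-level refilling is safe.
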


Further structural results, specialized for the VI domain, are developed in the proof of \Cref{thm:fpt-deltac-deltav}.

\section{FPT Results on Voter Interval}\label{sec:fpt-vi}

The following result shows that \genthiele on VI is FPT parameterized by $\Delta_C + \Delta_V$. Note that \genthiele on general instances is NP-hard even if $\Delta_C = 3$ and $\Delta_V = 2$, and the hardness persists even for \pav and for \cc, which are special cases of \genthiele~\cite{AzizGGMMW15,SkowronFL16_aij_set_of_items}.
This shows that the problem is easier to solve in the VI domain assuming P$\neq$NP (a standard assumption in computational complexity theory).
This result progresses towards resolving the central open question on the complexity of PAV in the VI domain.

\begin{theorem}\label{thm:fpt-deltac-deltav}
    \genthiele on the Voter Interval domain is FPT parameterized by $\Delta_C + \Delta_V$ and is XP parameterized by $\Delta_C$.
\end{theorem}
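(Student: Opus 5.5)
We prove \Cref{thm:fpt-deltac-deltav} by a left-to-right dynamic program along the VI-axis $(v_1,\dots,v_n)$. We first order the candidates by the left endpoint of their interval $V_c=[\ell_c,r_c]$, breaking ties by right endpoint. Since every interval has length at most $\Delta_C$, when the sweep reaches voter $v_j$ the only decisions that still matter for future voters concern candidates whose intervals contain $v_j$. The score of voter $v_j$ depends only on how many chosen candidates contain $v_j$, and that number is at most $\Delta_V$.

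The naive state records which candidates covering the current voter have been selected. For the XP bound in $\Delta_C$ alone we cannot bound the number of such candidates, so we use \Cref{thm:opt-non-dominated}. Among candidates with the same interval, only the count of selected ones matters, so we group candidates into interval types. Each type $[a,b]$ with $b-a<\Delta_C$ that contains $v_j$ is determined by at most $\Delta_C^2$ choices relative to $j$. The state at voter $v_j$ is then the vector of counts $x_{[a,b]}$ over types $[a,b]\ni v_j$ already decided, together with the total number of selected candidates so far (at most $k\le m$).

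Each count can be bounded. If some voter in $[a,b]$ already has many selected candidates, adding more only adds marginal weights, and we may cap counts at the number of candidates of that type. For the XP bound, we instead bound by $n$, or use that the relevant counts at a voter sum to at most $|A_{v_j}|\le m$. Counts per type are therefore at most $m$, giving $m^{O(\Delta_C^2)}$ states, which is XP. For FPT in $\Delta_C+\Delta_V$, each count is at most $\Delta_V$ and the number of types containing $v_j$ is at most $\Delta_V$, since distinct types with a candidate covering $v_j$ are approved by $v_j$. This gives $(\Delta_V+1)^{\Delta_V}$ profiles times $k$.

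Transitions go from $v_j$ to $v_{j+1}$. We forget the types with right endpoint $j$ and add the types with left endpoint $j+1$. For each new type we guess a count $0\le x\le$ its multiplicity. We add voter $v_{j+1}$'s score, which is determined by the sum of counts over types containing it, and update the committee size. The answer is the best value with total count exactly $k$. Non-dominance (\Cref{thm:opt-non-dominated}) is used to restrict guesses: a type strictly contained in another type needs the dominating type to be fully selected before it receives a positive count. This prunes states and justifies keeping only types of the first $k$ levels (\Cref{prop:opt-up-to-level-k}).

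The main obstacle is ensuring that the state really is a separator. Types that contain $v_{j+1}$ but started earlier have their counts fixed in the state. Correctness then follows because a voter's score depends only on its covering types, and these are all alive at that voter's step.
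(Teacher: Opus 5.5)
Your proof is correct, but it is organized differently from the paper's. The paper sweeps over \emph{triangles} $C_b$ (candidates grouped by their last supporter $v_b$). Its state is the committee size together with per-\emph{voter} approval counts for a window of at most $\Delta_C$ voters ending at $v_b$. It uses \Cref{thm:opt-non-dominated} essentially: in a non-dominated committee the part inside each triangle is a prefix (longer intervals first), so each state has only $|C_b|+1$ extensions. Since the per-voter counts are bounded by $\Delta_V$ and by $k$, this gives $O^*(\Delta_V^{\Delta_C})$ and the sharper XP bound $O^*(k^{\Delta_C})$ of \Cref{cor:vi-k-deltac}.

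You instead sweep voter by voter and store per-\emph{type} counts for all interval types alive at $v_j$. This is a path-decomposition DP whose bags are the sets $A_{v_j}$. It is a separator for the plain reason that any already-decided type meeting a future voter contains $v_j$. In the paper's per-voter formulation, the tracked window has to be read as all of the last $\Delta_C$ voters, not only $V_{C_b}$, to have this property. In your route non-dominance is only an optional pruning. Grouping identical candidates into types is just interchangeability, not a consequence of \Cref{thm:opt-non-dominated}.

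The cost of your route is a weaker XP bound: $(m+1)^{O(\Delta_C^2)}$ states instead of $(k+1)^{\Delta_C}$. The gain is that your FPT count never uses $\Delta_C$. The nonempty types containing $v_j$ have multiplicities summing to at most $|A_{v_j}|\le\Delta_V$, so there are at most $2^{\Delta_V}(k+1)$ states per voter. Your argument therefore shows FPT in $\Delta_V$ alone, which is stronger than the stated claim.

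Two small slips. A type's count is bounded by its multiplicity (hence by $m$, or by $\Delta_V$), not by $n$. And if some candidates have empty support, either remove them first as the paper does, or accept total count at most $k$ and pad with such candidates.
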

\begin{proof}
  Due to having VI preferences with respect to $(v_1, \dots, v_n)$, for every candidate $c \in C$, we have $V_c = \{ v_i, v_{i+1}, \dots, v_j\}$ for some $1 \leq i \leq j \leq n$  (w.l.o.g., we assumed there are no candidates with an empty set of supporters).
  We denote by $\min (V_c)$ and $\max (V_c)$ the indices of the first and, respectively, the last supporting voters of candidate $c$.
  
  We order the candidates $(c_1, \dots, c_m)$ such that for every pair of indices $i < j$, one of the following conditions holds:
(1) $\max (V_{c_i}) < \max (V_{c_j})$, or
(2) $\max (V_{c_i}) = \max (V_{c_j})$ and $\min (V_{c_i}) < \min (V_{c_j})$.

  For each $i \in [n]$, we define the set $C_i = \{c \in C : \max (V_c) = i \}$, called a \emph{triangle}.
  Each triangle $C_i$ consists of candidates with the same last supporting voter $v_i$.
  Note that $\{C_1, \dots, C_n\}$ is a partition of $C$.
  The term \emph{triangle} comes from the visual shape of the approval set of candidates in $C_i$ when the approval profile is displayed as a matrix.
  In particular, after applying the above candidate ordering $(c_1, \dots, c_m)$, where candidates in $C_i$ appear consecutively, and arranging the voters according to the VI-axis,
  the approvals form triangular patterns,
  as shown in \Cref{fig:vi-triangles}.
  We denote by $\min (C_i)$ and $\max (C_i)$ the indices of the first and, respectively, the last candidate included in $C_i$.
    \begin{figure}[h]
    \centering
    \renewcommand{\arraystretch}{1.2}
    \begin{tabular}{c|ccc|cccc}
              & \( c_1 \)  & \( c_2 \)  & \( c_3 \)  & \( c_4 \)  &  \( c_5 \) &  \( c_6 \) \\
    \hline
    \( v_1 \) & \cellcheck &            &            &            &            &            \\
    \( v_2 \) & \cellcheck & \cellcheck &            & \cellcheck &            &            \\
    \( v_3 \) & \cellcheck & \cellcheck & \cellcheck & \cellcheck &            &            \\
    \( v_4 \) &            &            &            & \cellcheck & \cellcheck & \cellcheck \\
    \end{tabular}
    \caption{
      An example of a VI approval profile,
      where the voters are ordered according to the VI-axis,
      and the candidates are ordered as described in the proof of \Cref{thm:fpt-deltac-deltav}.
      In this example, there are two non-empty triangles:
      $C_3 = \{c_1, c_2, c_3\}$ and $C_4 = \{c_4, c_5, c_6\}$.
      Triangle $C_3$ is associated with voter $v_3$,
      meaning that each candidate in $C_3$ has $v_3$ as their last supporter.
      For instance, $c_1 \in C_3$ because $\max(V_{c_1}) = 3$.
      The collective set of supporters of $C_4$ is $V_{C_4} = \{v_2, v_3, v_4\}$.
      We have $\min(C_4) = 4$ and $\max(C_4) = 6$.
    }
    \label{fig:vi-triangles}
  \end{figure}
  
  Due to \Cref{thm:opt-non-dominated}, let us consider a fixed non-dominated solution $W_{\opt}$.
  It holds that $W_{\opt} \cap C_i$ is a subset of candidates from $C_i$ having indices exactly $\{ \min(C_i), \dots, j \}$ where $j = \min(C_i) + |W_{\opt} \cap C_i| - 1 \leq \max(C_i)$.
  In other words, any optimal solution contains a (possibly empty) \emph{prefix} of every triangle $C_i$.
  
  The number of all possible committees consisting of prefixes of the triangles might still be exponential in $n$.
  We solve the problem via a dynamic program. It iterates over some order of the triangles $C_1, \dots, C_n$ and considers prefixes of $C_i$ based on prefixes taken so far to a solution of a limited number (in terms of the parameters) of preceding triangles.
  A crucial observation is that candidates from triangle $C_i$ have common supporters with at most $O(\Delta_V \cdot \Delta_C)$ many candidates outside of $C_i$.
  This allows for making ``local'' decisions when considering candidates from a triangle $C_i$, where the size of locality is defined in terms of the parameters.

  More formally, let $V_{C_i}$ be the set of voters supporting any candidate from $C_i$ (hence, $|V_{C_i}|$ is the height of a triangle $C_i$).
  We observe that $|V_{C_i}| \leq \Delta_C$,
  as all members of $C_i$ have the $i$-th voter as the last supporter and the maximum number of approvals received by a candidate is $\Delta_C$. Hence, we can write $V_{C_i} \subseteq \{ \max(V_{C_i})-\Delta_C,\dots ,\max(V_{C_i})\}$.
  We have $|C_i| \leq \Delta_V$, as every candidate in $C_i$ is supported by $v_i$.
  Thus, overall, the size of every triangle is bounded in terms of the parameters of our instance.

  We base our dynamic program table
  $T[a, b, d_1, \dots, d_{|V_{C_b}|}]$ on these observations: the table saves the maximum total score of \genthiele of a (non-dominated) committee of size $a$ taken from the first $b$ triangles (that is, from $C_1 \cup \dots \cup C_b$) such that the $i$-th voter from $V_{C_b}$ approves exactly $d_i$ committee members.
  As our goal is finding a committee, we will also store (only) one committee that achieves a particular score in table $T$.
  
  We initialize the table with the first triangle, that is,
  for $b=1$, we fill exactly $\min\{|C_1|, k\}+1$ entries of $T[a, 1, d_1, \dots, d_{|V_{C_1}|}]$ by considering committees
  $C_1(a)$ being a prefix of $C_1$ of size $a$. 
  Formally,
  $C_1(0) = \emptyset$ and
  $C_1(a) = \{c_{\min(C_1)}, \dots, c_{\min(C_1) + a -1}\}$ for
  $a \in \mathbb{N}: a \leq \min\{|C_1|, k\}$.
  By $d_i(a)$ we denote the number of approved candidates by $i$-th voter from $V_{C_1}$ in a committee $C_1(a)$.
  We store the total score achieved by a committee $C_1(a)$ in a corresponding entry of $T[a, 1, d_1(a), \dots, d_{|V_{C_1}|}(a) ]$.

  When iterating over $b > 1$, we will consider committees $C_b(a)$ for $0 \leq a \leq \min\{|C_b|, k\}$ being merged with every committee stored in non-empty entries created in the previous step, i.e., when considering solution up to $(b-1)$-th triangle.
  Formally,
  we take a committee $W_\temp$ from every non-empty cell of $T[|W_\temp|, b-1, d_1^\temp, \dots, d_{|V_{C_{b-1}}|}^\temp]$
  and we create a new committee $W_\temp \cup C_b(a)$ for every $a$ such that $0 \leq a \leq \min\{|C_b|, k-|W_\temp|\}$.
  Note that in this way $|W_\temp \cup C_b(a)| \leq k$.
  Now, we compare the score of $W_\temp \cup C_b(a)$ with a score of a committee of respective entry of $T$, i.e.,
  $T[|W_\temp|+a, b, d_1(a,W_\temp), \dots, d_{|V_{C_b}|}(a,W_\temp)]$,
  where $d_i(a,W_\temp)$ is the number of approved candidates by the $i$-th voter from $V_{C_b}$ in the committee $W_\temp \cup C_b(a)$.
  If $W_\temp \cup C_b(a)$ achieves a strictly higher score, i.e.
  $ \score(W_\temp \cup C_b(a)) > T[|W_\temp|+a, b, d_1(a,W_\temp), \dots, d_{|V_{C_b}|}(a,W_\temp)]$,
  then we update the entry with this higher score, and we store $W_\temp \cup C_b(a)$ as a committee achieving this score.
  We note that the procedure is well-defined also in the case of $C_b = \emptyset$ because the only prefix of $C_b$ considered will be an empty set.
  
  To obtain the maximum total score committee,
  we search for the largest value among non-empty entries of
  $T[k, n, d_1, \dots, d_{|V_{C_n}|}]$.
  The correctness of the dynamic program follows from the fact that every such entry stores a valid committee of size~$k$, and, as argued before,
  the optimal committee~$W_\opt$ satisfies
  $W_\opt \cap C_i = C_i(|W_\opt \cap C_b|)$, i.e., $W_\opt$ consists of prefixes of the triangles $C_i$.
  This implies that, for every $b \in [n]$, the table entry
  $T[|W_\opt \cap (\cup_{i \in [b]} C_i)|, b, d_1(b), \dots, d_{|V_{C_b}|}(b)]$
  stores a committee whose total score is at least $\score_w(W_\opt \cap \cup_{i \in [b]} C_i)$
  because the relevant triangle prefixes $C_i(|W_\opt \cap C_i|)$ (or any other prefixes yielding the same score and satisfying each voter in $V_{C_b}$ by the same number of approved committee members) were explicitly considered during the construction of this entry.
  Together with the base case, i.e., filling the entries
  $T[a, 1, d_1, \dots, d_{|V_{C_1}|} ]$,
  this completes an inductive argument for the correctness of the algorithm.

  The size of $T$ is at most
  $O(k \cdot n \cdot \Delta_V^{\Delta_C})$.
  In order to fill all entries for particular $b \in [n]$, we consider at most $k+1$ prefixes of $b$-th triangle merged with every committee stored in a non-empty cell of $T$ with $b-1$.
  Therefore, we consider at most $O(k^2 \cdot \Delta_V^{\Delta_C})$ many committees, each in polynomial time.
  In total, this gives a running time of $O^*(\Delta_V^{\Delta_C})$ which is FPT with respect to $\Delta_C + \Delta_V$. 
\end{proof}
A consequence from the above presented proof is an FPT algorithm parameterized by $\Delta_C + k$.
That is, the indices of $T$, which encode how many times every voter involved is represented, never exceed the committee size $k$.
\begin{corollary}\label{cor:vi-k-deltac}
    \genthiele on the Voter Interval domain can be solved in $O^*(k^{\Delta_C})$ time.
\end{corollary}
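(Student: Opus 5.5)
We reuse the dynamic program from the proof of \Cref{thm:fpt-deltac-deltav} without any change to the algorithm. Only the count of table entries is redone, now bounding the last indices by $k$ rather than by $\Delta_V$. Recall that the table has the form $T[a, b, d_1, \dots, d_{|V_{C_b}|}]$, where $a \le k$, $b \le n$, and $d_i$ is the number of members of the stored committee approved by the $i$-th voter of $V_{C_b}$.

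\textbf{Key step: bounding the $d_i$.} Every stored committee has size at most $k$, so each voter approves at most $k$ of its members, i.e., $0 \le d_i \le k$. Each such voter also approves at most $\Delta_V$ candidates, so in fact $d_i \le \min\{k,\Delta_V\}$, but we only need $d_i \le k$. We already know $|V_{C_b}| \le \Delta_C$ for every triangle. Hence, for fixed $a$ and $b$ there are at most $(k+1)^{\Delta_C}$ index vectors, and the table has $O(k \cdot n \cdot (k+1)^{\Delta_C})$ entries. No appeal to $|C_i| \le \Delta_V$ is needed here.

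\textbf{Running time.} For each $b$, the transition combines every non-empty cell of layer $b-1$ (at most $O(k (k+1)^{\Delta_C})$ cells) with at most $k+1$ prefixes of $C_b$. Evaluating the score of the merged committee and its vector $(d_i(a,W_\temp))_i$ takes polynomial time. Summing over the $n$ values of $b$ gives $O^*((k+1)^{\Delta_C}) = O^*(k^{\Delta_C})$: for $k \ge 2$ we have $(k+1)^{\Delta_C} \le (3/2)^{\Delta_C} k^{\Delta_C}$, and the factor $(3/2)^{\Delta_C}$ is absorbed into the polynomial because $\Delta_C \le n$. For $k = 1$ the problem is trivially polynomial.

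\textbf{Correctness.} The argument is verbatim the one for \Cref{thm:fpt-deltac-deltav}. \Cref{thm:opt-non-dominated} gives a non-dominated optimum, which consists of prefixes of the triangles, and the inductive argument shows that the entry indexed by this optimum's restriction to the first $b$ triangles stores a committee of at least the same score. The only step needing care is the absorption of the $(k+1)$ versus $k$ base discussed above. There is no real obstacle: the corollary is just a re-accounting of the index range.
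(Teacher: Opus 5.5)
Your plan matches the paper's: keep the dynamic program of \Cref{thm:fpt-deltac-deltav} and observe that the count indices never exceed $k$. However, the step you single out as the only delicate one is false. From $\Delta_C\le n$ you only get $(3/2)^{\Delta_C}\le(3/2)^{n}$, which is exponential, not polynomial. In general $(k+1)^{\Delta_C}=k^{\Delta_C}(1+1/k)^{\Delta_C}$, and the correction factor lies between $e^{\Delta_C/(k+1)}$ and $e^{\Delta_C/k}$. It is therefore polynomially bounded only when $\Delta_C=O(k\log|\mathcal{I}|)$; for $k=2$ and a candidate approved by all $n$ voters you are comparing $3^{n}$ with $2^{n}$. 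So your count proves $O^*((k+1)^{\Delta_C})$. That is also all the paper's one-line argument gives: it silently drops the $+1$, just as the proof of \Cref{thm:fpt-deltac-deltav} writes $\Delta_V^{\Delta_C}$ for $(\Delta_V+1)^{\Delta_C}$. The bound is still FPT in $k+\Delta_C$, but getting base exactly $k$ needs an extra idea. One option is to count only non-empty cells. The index vector records coverage counts on an interval of voters, so it is a sum of at most $k$ indicator vectors of sub-intervals and has total variation at most $2k$; hence there are at most $(k+1)4^{k}\binom{\Delta_C+2k}{2k}$ such vectors. Take $k\ge 2$ and a suitable constant $c$. If $\Delta_C\le c\,k\ln|\mathcal{I}|$, then $(k+1)^{\Delta_C}\le e^{\Delta_C/k}k^{\Delta_C}\le|\mathcal{I}|^{c}k^{\Delta_C}$. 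Otherwise $k^{\Delta_C}\ge|\mathcal{I}|^{ck\ln 2}$, which exceeds $(k+1)4^{k}(3|\mathcal{I}|)^{2k}$. Either way there are $O^*(k^{\Delta_C})$ non-empty cells.

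Your correctness paragraph has a further gap, inherited verbatim from the proof of \Cref{thm:fpt-deltac-deltav}. Knowing that a cell stores a committee scoring at least as much as $W_\opt$ restricted to the first $b$ triangles is not enough. Two partial committees of equal size with equal counts on $V_{C_b}$ may differ on earlier voters who also support candidates of later triangles, and the program keeps only the higher-scoring one. Concretely, take \cc with $k=2$, voters $v_1,\dots,v_5$, and candidates $Y,X,s$ with $V_Y=\{v_1\}$, $V_X=\{v_2,v_3\}$, $V_s=\{v_2,\dots,v_5\}$. Here $C_4=\emptyset$, and the cell $T[1,4]$ keeps $\{X\}$ (score $2$) and discards $\{Y\}$ (score $1$). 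The program therefore returns $\score(\{X,s\})=4$, although $\{Y,s\}$ is a non-dominated committee of score $5$. Adding a candidate $Z$ with $V_Z=\{v_4\}$ to make $C_4$ non-empty changes nothing. The repair is to index by the counts of the last $\Delta_C-1$ voters $v_{b-\Delta_C+2},\dots,v_b$, which are exactly the earlier voters that candidates of $C_{b'}$, $b'>b$, can reach. Two partial committees with equal size and equal index then have the same marginal gain for every extension, so keeping the better one is safe. Your count then becomes $(k+1)^{\Delta_C-1}$ vectors per pair $(a,b)$, with the same caveat about the base as above.
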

We note that \Cref{thm:fpt_k_deltac} provides an FPT algorithm parameterized by $\Delta_C + k$ for every instance,
but its running time is double-exponential on $k$ in contrast to the non-exponential dependence on $k$ in the case of VI (\Cref{cor:vi-k-deltac}).
It is another example of parameterization for which much more efficient algorithms exist for the VI structured domain.

\section{FPT Results on General Instances}\label{sec:general-instances}

In the next two subsections, we present two FPT algorithms for \genthiele which answer affirmatively two open questions known in the literature~\cite{YangW18-aamas18,YangW23-jaamas},
which were asked for a special case of PAV.
In \Cref{sec:poly-deltac-2} we provide a polynomial-time algorithm for instances with $\Delta_C = 2$.
In \Cref{sec:fpt-d} we give an FPT algorithm parameterized by the total score of an optimal committee.

\subsection[Polynomial-Time Algorithm when Each Candidate is Approved by at Most Two Voters]{Polynomial-Time Algorithm when $\Delta_C = 2$}\label{sec:poly-deltac-2}
Our polynomial-time algorithm for \genthiele with $\Delta_C = 2$ is based on a generalization of an integer linear program (ILP) studied by Peters~\cite{Peters18_aaai}.
The ILP formulation in \cite{Peters18_aaai} is defined for any \wthiele, but it can be easily adjusted to \genthiele by modifying the objective function~\cite{SornatWX22_ijcai}.
For a given election $(C,V,A,k)$ the ILP for \genthiele \eqref{ilp_gt} is defined as follows, where roughly speaking, the $y_c$ variables mark selected candidates, the $x_{v,i}$ variables track voter satisfaction levels and the objective ensures that the highest available values in a Thiele sequence are always chosen:
\begin{align}
    \tag{ILP-GT}
    \textrm{maximize}\quad  \sum_{v \in V} \sum_{i \in [k]} w^v_i   &\cdot x_{v,i}                   \label{ilp_gt}\\
    \textrm{subject to}\quad\quad\quad\hspace{3pt} \sum_{c \in C}                  y_c     &= k  \label{ilp_gt_cardinality} \\
                        \sum_{i \in [k]}                x_{v,i} &\leq \sum_{c \in A_v} y_c       \quad\quad\;\forall v \in V \label{ilp_gt_counting}\\
                        x_{v,i} &\in \{0,1\} \quad\quad\quad\forall v \in V, i \in [k] \nonumber\\
                        y_c     &\in \{0,1\} \quad\quad\quad\hspace{1pt}\forall c \in C \nonumber
\end{align}

Peters~\cite{Peters18_aaai} showed that the constraint matrix of \ref{ilp_gt} is totally unimodular (TU) when an approval profile is CI, hence an optimal solution can be found in polynomial time.
He actually argued that if an approval profile $A$ captured as a matrix, one can find an optimal solution in polynomial time even when an additional row with all-$1$s is appended (corresponding to Constraint~\eqref{ilp_gt_cardinality}), as the resulting matrix is still TU.
In the case of $A$ being CI, total unimodularity was an immediate implication from the fact that $A$ being CI has consecutive one property, and because the additional row consists of only $1$s (hence, it is consistent with the consecutive $1$s property).

In the case where $A$ is VI, total unimodularity is not necessarily preserved.
While the transpose $A^T$ satisfies the consecutive $1$s property and is thus TU,
this property may be lost when appending an all-ones row to $A$.
Specifically, the matrix
$\matrixapp{\mathbf{1}_m}{A}$ obtained by adding a row of $1$s of length $m$ to $A$---does not, in general, have the consecutive $1$s property in its transpose.
There exists a small VI approval profile that yields a non-TU matrix; see Example~\eqref{example:3x4},
where a VI profile with $4$ candidates and $3$ voters together with the cardinality constraint (the first row) is not TU (its determinant is $-2$).
\begin{equation}
  \label{example:3x4}
    \begin{bmatrix}
            1 & 1 & 1 & 1 \\
            1 & 1 & 0 & 0 \\
            1 & 0 & 1 & 0 \\
            1 & 0 & 0 & 1
    \end{bmatrix}
\end{equation}

Example \eqref{example:3x4} extends to Example~\eqref{example:2m} with $2m$ candidates and $2m-1$ voters for any $m \geq 2$, whose determinant is $2-2m \leq -2$.
Hence, these matrices are not TU either.
\begin{equation}
\label{example:2m}
  \begin{bmatrix}
     1                     & \mathbf{1}_{2m-1} \\
     (\mathbf{1}_{2m-1})^T & I_{2m-1}
  \end{bmatrix}
\end{equation}

However, there are more classes of integer programs that can be solved in polynomial time. In particular, if the coefficients of the constraint matrix are in $\{-2,-1,0,1,2\}$ and the sum of absolute values is at most $2$ for each column, then the problem is polynomial time solvable~\cite{schrijver2003combinatorial}. Such matrices are called \emph{generalized matching} matrices. 

\begin{theorem}[Schrijver~\cite{schrijver2003combinatorial}]\label{thm:matchingMatrix}
ILPs with a generalized matching matrix can be solved in strongly polynomial time.
\end{theorem}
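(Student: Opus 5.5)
The plan is not to reprove this from first principles. I would reduce it to the classical theory of matchings in bidirected graphs, which is where Schrijver~\cite{schrijver2003combinatorial} develops it. Let $M$ be a generalized matching matrix. Every column has entries in $\{-2,\dots,2\}$ and absolute column sum at most $2$, so each column is one of the following: zero; a single $\pm1$ (a half-edge); a single $\pm2$ (a loop); or two entries $\pm1$ in two distinct rows (an edge). Reading rows as vertices and columns as such edges gives a \emph{bidirected graph} $G$. An ILP $\max\{w^\top x : b' \le Mx \le b,\ \ell \le x \le u,\ x \in \mathbb{Z}\}$ is then exactly a capacitated $b$-matching/$b$-transshipment problem on $G$. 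Equality and inequality rows are handled uniformly by two-sided bounds, since slack variables are themselves half-edges.

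The first step is this translation, including the normalisations. I would complement edge orientations by substituting $x \mapsto u - x$ on edges and then fix signs on half-edges. This reduces to the case where all edges are ``ordinary'' or where the problem is a standard capacitated $b$-matching in an undirected graph. Mixed-sign edges contribute bipartite/flow-like structure, which is no harder. Next I would apply the standard gadget of Tutte, subdividing edges and splitting vertices. This turns capacitated $b$-matching into weighted perfect matching in an auxiliary graph whose size is polynomial in $\sum b_v$ and $\sum u_e$. Edmonds' weighted matching algorithm then solves it, which already gives a pseudo-polynomial algorithm.

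The main obstacle is strong polynomiality: $b$ and $u$ may be exponentially large in the bit size. I would follow the Anstee / Gerards approach. First solve the LP relaxation in strongly polynomial time, using Tardos' algorithm, since the matrix entries are bounded by $2$. The LP over a generalized matching matrix has half-integral vertices, whose fractional parts lie on odd cycles. A proximity argument, in the style of Cook--Gerards--Schrijver--Tardos, then bounds the distance between an optimal LP vertex and some optimal integral solution by $O(n)$ in $\ell_\infty$. So I round the LP solution down, fix that part, and obtain a residual problem whose $b$ and $u$ values are bounded by a polynomial in the number of rows and columns. That residual problem is solved by the pseudo-polynomial matching reduction above, giving overall strongly polynomial time. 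The delicate point I expect to spend effort on is the proximity bound. It requires decomposing the difference of two solutions into alternating walks and odd-cycle components of the bidirected graph, each of bounded size, and exchanging them one at a time without losing optimality.
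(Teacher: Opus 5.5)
The paper does not prove this statement. It quotes it from Schrijver's book, where ILPs with such matrices are rewritten as capacitated $b$-matching problems in bidirected graphs and reduced to ordinary $b$-matching. Strong polynomiality there is Anstee's method: solve the fractional relaxation by a min-cost flow on the bipartite double cover to get a half-integral optimum, use it to pass to an instance with polynomially bounded $b$, and finish with Tutte's reduction and Edmonds' algorithm. Your plan has the same skeleton. The real difference is that you replace Anstee's odd-cycle rounding by a generic Cook--Gerards--Schrijver--Tardos proximity argument, and the flow computation by Tardos' LP algorithm. Tardos' algorithm is fine here, since the matrix entries are bounded by $2$. Your route is more uniform and does not even need half-integrality, but it puts the whole burden on a proximity bound you must prove yourself. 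One step is wrong as stated, and another rests on an ingredient you have not identified.

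\emph{The normalisation step is wrong as stated.} Substituting $x_e\mapsto u_e-x_e$ negates the whole column. It turns a $(-1,-1)$ edge into a $(+1,+1)$ edge, but a mixed $(+1,-1)$ edge stays mixed. Negating rows does not help either: both operations preserve the parity of the number of mixed edges on every cycle. So you cannot reach a purely undirected (or purely directed) instance this way, and Tutte's vertex-splitting gadget does not apply to the remaining mixed edges. ``Flow-like, hence no harder'' does not close this. The standard repair is subdivision. For a mixed edge $e=uv$ with sign $-1$ at $v$ and $0\le x_e\le u_e$:
\begin{itemize}
\item add a new vertex $t_e$ and let $x_e$ become a positive edge $ut_e$;
\item introduce $y_e=u_e-x_e$ as a positive edge $t_ev$, and impose $x_e+y_e=u_e$ at $t_e$;
\item shift both bounds of row $v$ by $u_e$.
\end{itemize}
This needs finite capacities, so do it after the proximity step, which boxes every variable.

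\emph{The proximity step needs an ingredient you have not named.} The Cook--Gerards--Schrijver--Tardos theorem gives distance at most $N\Delta$, where $N$ is the number of variables and $\Delta$ the largest subdeterminant. For generalized matching matrices, $\Delta$ can be $2^{\Theta(r)}$ with $r$ rows (take vertex-disjoint odd cycles), so quoting the theorem gives nothing polynomial. What saves your decomposition idea is that their argument only uses the $\ell_\infty$-norm of the integral generators of the cone $\{y: A_1y\le 0,\ A_2y\ge 0\}$. Here those generators are circuits of a bidirected graph with some vertices deleted: balanced cycles and handcuffs, possibly ending in loops or half-edges. They can be long, but their primitive integral vectors have entries in $\{0,\pm1,\pm2\}$, so ``bounded'' must mean bounded entries, not bounded length. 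This gives an optimal integral $z$ with $\|z-x^*\|_\infty\le 2N$.

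Also, rounding $x^*$ down and fixing that part is not justified by this bound. What you may fix is the box $\lceil x^*_e-2N\rceil\le x_e\le\lfloor x^*_e+2N\rfloor$. After shifting, all capacities are at most $4N$ and the row bounds can be clipped to $O(N^2)$ in absolute value. Only then does the pseudo-polynomial matching reduction run in strongly polynomial time.
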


These matrices correspond to problems called (generalized) matching problems, hence the name. The corresponding ILPs capture a variety of well-known problems in polynomial time such as minimum cost flow, minimum cost ($b$-)matching and certain graph factor problems~\cite{schrijver2003combinatorial}. These structures even remain FPT time solvable parameterized by the number $p$ of additional columns~\cite{lassota2025parameterizedalgorithmsmatchinginteger}.
However, we are facing the problem of having an additional row, which is shown to be, in general, not FPT assuming FPT is not equal to W[1] (a common hypothesis in parameterized complexity similar to P versus NP)~\cite{lassota2025parameterizedalgorithmsmatchinginteger}.
Fortunately, we can alter \ref{ilp_gt} slightly and obtain a generalized matching matrix. 

Before we show the proof, note that if there is a candidate that is not approved by any voter, we can delete this candidate from the instance. If $k$ is larger than the remaining candidates, we take all those remaining candidates and greedily fill up our committee with any candidate from the discarded list to obtain an optimal solution. If there is at least one candidate that is approved by only one voter, then we add a new dummy voter $v_d$ that approves exactly all of candidates with only one approval. This voter $v_d$ will have a zero contribution to the objective function. Hence, in the following, we assume w.l.o.g.\ that all candidates are approved by exactly two voters.

\begin{theorem}\label{thm_matrix_tu}
    \genthiele with $\Delta_C = 2$ can be solved in polynomial time.
\end{theorem}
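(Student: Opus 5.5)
The plan is to rewrite \ref{ilp_gt} into an equivalent ILP whose constraint matrix is a generalized matching matrix, and then apply \Cref{thm:matchingMatrix}. By the preprocessing above, every candidate is approved by exactly two voters. The only column that currently violates the generalized matching condition is that of $y_c$: it has a $1$ in the cardinality row \eqref{ilp_gt_cardinality} and a $1$ in each of the two counting rows \eqref{ilp_gt_counting} of its supporters, so its column sum of absolute values is $3$. The columns of $x_{v,i}$ already have absolute sum $1$. So the task is to move the cardinality information off the $y$-variables.

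\textbf{Step 1: make the counting constraints equalities.} For each voter $v$ I introduce an integer slack variable $u_v \ge 0$ and replace \eqref{ilp_gt_counting} by
\[
\sum_{i\in[k]} x_{v,i} + u_v - \sum_{c\in A_v} y_c = 0 .
\]
This is clearly equivalent to the original inequality.

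\textbf{Step 2: replace the cardinality row.} Every candidate lies in exactly two sets $A_v$, so
\[
\sum_{v}\sum_{c\in A_v} y_c = 2\sum_c y_c .
\]
Summing the new equalities over $v$ therefore shows that $\sum_c y_c = k$ is equivalent to
\[
\sum_{v\in V}\sum_{i\in[k]} x_{v,i} + \sum_{v\in V} u_v = 2k .
\]
I use this row in place of \eqref{ilp_gt_cardinality}.

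\textbf{Step 3: check the column structure.} In the new system:
\begin{itemize}
\item each $y_c$ column has entries $-1$ in exactly the two voter rows of its supporters, so absolute sum $2$;
\item each $x_{v,i}$ column has a $1$ in row $v$ and a $1$ in the new cardinality row, so absolute sum $2$;
\item each $u_v$ column likewise has a $1$ in row $v$ and a $1$ in the cardinality row, so absolute sum $2$.
\end{itemize}
All coefficients lie in $\{-1,0,1\}$. Combinatorially, this is a $b$-matching-type problem: voters and one extra ``budget'' vertex are nodes, candidates are edges between their two supporters, and the $x$- and $u$-variables are edges from voters to the budget vertex. The bounds $0\le x_{v,i},y_c\le 1$ are capacity bounds, which the generalized matching framework of Schrijver allows. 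The objective is linear, and the $u_v$ and dummy voter $v_d$ get zero weight.

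\textbf{Step 4: correctness.} I need to show that optimal solutions of the new ILP correspond to optimal committees with the same score. Given a committee $W$ of size $k$, set $y$ to its indicator vector. For each voter, set $x_{v,1}=\dots=x_{v,|A_v\cap W|}=1$ and all other $x_{v,i}=0$, and set $u_v=0$. This solution is feasible and achieves $\score_w(W)$. For the converse, take any feasible solution. Since $w^v$ is non-increasing, we may assume the ones among the $x_{v,i}$ form a prefix, and moving mass from $u_v$ into further $x_{v,i}$ never hurts. Hence the objective value is at most the score of the committee $\{c : y_c=1\}$, which has size exactly $k$ by Step 2. The dummy voter and the deleted unapproved candidates are handled as described before the theorem.

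\textbf{Main obstacle.} I expect the delicate point to be making sure \Cref{thm:matchingMatrix} really covers the form used here: equality constraints with right-hand side $2k$, upper bounds of $1$ on variables, and mixed signs. If upper bounds are not directly permitted, I would first try dropping the bounds on $x_{v,i}$ and $y_c$ altogether. For $y_c$, I would argue that bounds are unnecessary or can be emulated by splitting each candidate edge through an auxiliary node. For $x_{v,i}$, I would exploit the monotonicity of $w^v$ so that values above $1$ are never beneficial, or else encode each $x_{v,i}\le 1$ as an extra row with a slack variable, which keeps every column sum at most $2$.
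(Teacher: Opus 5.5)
Your proposal is correct and follows essentially the paper's route: the paper also replaces the cardinality row by $\sum_{v}\sum_{i} x_{v,i} = 2k$ and turns the counting constraints into equalities. It does so directly, without your slacks $u_v$, which is harmless because one can always set exactly $|A_v\cap W|\le k$ of the $x_{v,i}$ to $1$. It then invokes Schrijver's theorem, whose framework does permit the $0/1$ variable bounds. Your fallback ideas in the last paragraph are therefore unnecessary, and they would not work as stated: an unbounded $x_{v,1}$ would be profitable precisely because $w^v$ is non-increasing, and an extra row for $x_{v,i}\le 1$ raises that column's absolute sum to $3$.
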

\begin{proof}
We modify the formulation of \ref{ilp_gt} as follows.
First, we replace Constraint~\eqref{ilp_gt_cardinality} by the following equality:
\begin{align}\label{constraint_x_2k}
    \sum_{v \in V} \sum_{i \in [k]} x_{v,i} = 2k .
\end{align}
Second, we strengthen Constraint~\eqref{ilp_gt_counting} by replacing the inequality with equality:
\begin{align}\label{constraint_x_y}
     \sum_{i \in [k]} x_{v,i} = \sum_{c \in A_v} y_c \quad\quad\;\forall v \in V .
\end{align}

Constraint~\eqref{ilp_gt_cardinality} previously ensured that exactly $k$ candidates are selected.
As argued above, each candidate has exactly two supporters, so choosing any candidate (represented by variable $y_c$) implies that two variables $x_{v,i}$ must be set to $1$ (one for each approving voter) for valid solutions.
Thus, replacing \eqref{ilp_gt_cardinality} by \eqref{constraint_x_2k} and enforcing equality in Constraint~\eqref{constraint_x_y} shifts the responsibility for enforcing the committee size from the $y_c$ variables to the $x_{v,i}$ variables.

After these modifications, each variable $x_{v,i}$ and $y_c$ (each corresponding to one column of the constraint matrix) appears in exactly 2 constraints with coefficients in $\{-1,1\}$ (and $0$ everywhere else).
This implies that the resulting constraint matrix is a generalized matching matrix.
By applying Theorem~\ref{thm:matchingMatrix}, we obtain a solution in polynomial time.
\end{proof}

\subsection{FPT Algorithm Parameterized by the Score}\label{sec:fpt-d}

Next, we prove that \genthiele is FPT parameterized by $k+\Delta_C$.
The result relies on reducing the problem to a set cover variant called $p$-partial set cover that is solvable efficiently. 

In the $p$-partial set cover problem, we are given a universe $U$ of $t$ elements, and a set of subsets $\mathcal{S}$ of size $s$. The goal is to cover at least $p$ different elements of $U$ using the minimal number of sets in $\mathcal{S}$. This problem is FPT time solvable parameterized by $p$. 

\begin{theorem}[Bl{\"{a}}ser~\cite{Blaser03}]\label{thm:pSetCover}
A minimum $p$-partial set cover can be computed in time $2^{O(p)}\cdot s \cdot t$.
\end{theorem}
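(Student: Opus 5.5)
This theorem is Bläser's cited result, so within the paper it can simply be invoked. If I had to reprove it, I would use the color-coding approach the paper mentions, colouring the covered elements. The claim to establish is that a minimum $p$-partial set cover can be found in time $2^{O(p)}\cdot s\cdot t$.

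\textbf{Reduction to a small instance.} Suppose an optimal solution uses sets $S_1,\dots,S_r$. Assign each covered element to the first chosen set containing it. Discard sets that receive no assigned element and keep only $p$ assigned elements in total. This leaves $r\le p$ sets whose \emph{private} parts are pairwise disjoint and together have size exactly $p$. So it suffices to find the fewest sets whose chosen disjoint subsets cover $p$ distinct elements.

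\textbf{Colour-coding step.} Colour $U$ uniformly at random with $p$ colours. The $p$ target elements receive distinct colours with probability $p!/p^p\ge e^{-p}$. Given the colouring, run a dynamic program over subsets $X\subseteq[p]$ of colours. Let $F[X]$ be the minimum number of sets such that each set is assigned a colour class, the classes are disjoint with union $X$, and every colour in a set's class actually occurs in that set.

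Each set $S$ contributes only through its colour set $\chi(S)$, the set of colours appearing in $S$. This gives the recurrence
\[
F[X]=\min_{S\in\mathcal{S}}\ \min_{\emptyset\neq Y\subseteq X\cap \chi(S)} \bigl(1+F[X\setminus Y]\bigr), \qquad F[\emptyset]=0.
\]
The answer is $F[[p]]$.

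\textbf{Running time.} Computing $\chi(S)$ for all sets takes $O(s\cdot t)$ time. The subset-pair enumeration costs $3^p$ per set, for a total of $O(3^p\cdot s)$. Since a set appears only through its colour set, I can first bucket the sets by $\chi(S)$. Then only the $2^p$ distinct colour sets matter, which removes the dependence on $s$ inside the exponential part.

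\textbf{Derandomization.} Replace the random colouring by a $(t,p)$-perfect hash family of size $e^{p}p^{O(\log p)}\log t$. To get the clean bound $2^{O(p)}\cdot s\cdot t$ rather than an extra $\log t$ factor, I would need the tighter splitter constructions. This is the step I expect to be the main obstacle.

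The obstacle is avoided if I follow Bläser's own approach, which works directly on the incidence structure. So in the paper I would simply cite \Cref{thm:pSetCover} rather than reprove it.
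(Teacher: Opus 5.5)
\paragraph{Citation matches the paper.} Invoking the result, as you ultimately propose, is exactly what the paper does. \Cref{thm:pSetCover} is imported from Bl\"{a}ser without proof, so nothing more is required there. Your self-contained sketch (colour the $p$ covered elements, then run the DP over colour subsets) is sound up to the derandomization step. At that step your suggested remedy cannot work, and the obstacle is not real.

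\paragraph{The $\log t$ factor.} ``Tighter splitter constructions'' cannot remove it. Any family $F$ of functions $[t]\to[p]$ that is injective on every $p$-subset must separate every pair of elements. Hence $x\mapsto (f(x))_{f\in F}$ is injective into $[p]^{|F|}$, and $|F|\ge \log_p t$. The fix must instead make each colouring cheap, using the missing observation that large sets are trivial.
\begin{itemize}
\item If some set has at least $p$ elements, the optimum is $1$. Otherwise every set has fewer than $p$ elements.
\item Then computing all colour sets $\chi(S)$ for one colouring costs $O(sp)$ rather than $O(st)$. One colouring is processed in $O(sp)+2^{O(p)}$ time.
\item Over $e^{p}p^{O(\log p)}\log t$ colourings this totals $2^{O(p)}\cdot s\log t\le 2^{O(p)}\cdot s\cdot t$. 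The unavoidable $\log t$ is simply absorbed by the factor $t$ in the target bound.
\item Build the family only over the at most $\min\{t,sp\}$ elements that lie in some set. Its construction then costs $2^{O(p)}\min\{t,sp\}\log t$. This is at most $2^{O(p)}\cdot s\cdot t$: if $s\ge\log t$, use $t\log t\le st$; otherwise use $sp\log t\le spt$.
\end{itemize}
For the weighted variant mentioned after the theorem, run the DP on the sets of size less than $p$ and compare the result with the cheapest set of size at least $p$. This works because, with nonnegative weights, such a set alone is feasible and costs no more than any solution containing it.

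\paragraph{Bl\"{a}ser's approach.} Your closing claim that Bl\"{a}ser's approach ``works directly on the incidence structure'' and thereby avoids the issue is not supported by anything in the paper. It is also unnecessary, given the fix above.
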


Note that this algorithm also solves the weighted $p$-partial set cover problem where each set in $\mathcal{S}$ is assigned a weight and the goal is to find the minimum weight set cover hitting at least $p$ distinct elements of $U$.

Our algorithm reduces \genthiele to $p$-partial set cover with $p$ bounded by a function of $k$ and $\Delta_C$.
We view each voter as an element of the universe and every candidate as the set of supporters.
Contrary to the set cover problem, in \genthiele, it can be beneficial to ``cover'' a voter multiple times. To implement this in our reduction, we use color coding, a powerful tool to design FPT time algorithms (see, e.g., the book of Cygan \etal~\cite{CyganFKLMPPS15_fpt_book}). Color coding is used to color solution candidates such that, with high probability, an optimal solution only takes one solution candidate per color. We use the colors to color all approvals where each color indicates the additive contribution to the valuation function. This allows us to create different elements not just for every approval, but also the way it contributes to the objective function, e.g., the number of times a voter has been ``covered''.

To derandomize color coding, splitters have been introduced~\cite{NaorSS95}. Splitters compute a number of colorings instead of a single one such that at least one coloring has the desired property that an optimal solution only takes one solution candidate per color. This can be expressed in terms of hash functions: 

\begin{definition}
An $(n,k,\ell)$ splitter is a family of hash functions $F$ from $\{1, 2, \dots, n\}$ to $\{1, 2, \dots, \ell\}$ such that for every $S \subseteq \{1,2,\dots,n\}$ with $|S|=k$, there exists a function $f \in F$ that splits $S$ evenly; that is, for every $j,j'\leq \ell$, we have $|f^{-1}(j)\cap S|$ and $|f^{-1}(j')\cap S|$ differ by at most 1. 
\end{definition}

\begin{lemma}[Naor, Schulman, and Srinivasan~\cite{NaorSS95}]\label{lem:splitters}
There exists an $(n,k,k)$ splitter of size $e^kk^{O(\log(k))}\log(n)$ which is computable in time $e^kk^{O(\log(k))} \cdot n\log(n)$.
\end{lemma}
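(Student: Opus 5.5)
The plan is the standard two-stage color-coding derandomization. Since $\ell = k$, a function that splits a $k$-set $S$ evenly gives every color exactly one element of $S$. So an $(n,k,k)$ splitter is exactly a $k$-perfect hash family. I would build it by composing two families: one that shrinks the universe from $[n]$ to a set of size polynomial in $k$ while staying injective on $S$, and one that is a perfect hash family on that small universe.

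\emph{Step 1: reduce the universe.} I want a family $H_1$ of maps $[n] \to [k^{O(1)}]$ of size $k^{O(1)} \log n$ such that every $k$-set $S$ is mapped injectively by some $h \in H_1$. I would take $x \mapsto x \bmod p$ over primes $p$ up to roughly $k^2 \log n$. A given pair $x \ne y$ in $S$ collides modulo $p$ only when $p$ divides $x - y$, and $x - y$ has at most $\log n$ prime divisors. So the $\binom{k}{2}\log n$ "bad" primes cannot exhaust all primes in that range, and some prime in the range is injective on $S$. The resulting family has size $O(k^2\log n)$, its range is $[O(k^2\log n)]$, and it is computable in time $O(k^2\log n \cdot n)$. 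A second round of the same modular hashing would shrink the range to $\poly(k)$, leaving a single $\log n$ factor overall.

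\emph{Step 2: perfect hashing on $[m]$ with $m = \poly(k)$.} Here I would use the probabilistic bound: a uniformly random $f\colon[m]\to[k]$ is bijective on a fixed $k$-set with probability $k!/k^k \ge e^{-k}$. By a union bound over the $\binom{m}{k}$ sets, about $e^k \cdot k\log m$ random functions suffice. To make this deterministic and efficient, I would avoid enumerating all functions. Instead I would use the Naor--Schulman--Srinivasan recursion. First split $[m]$ into $\ell' = k/\log k$ blocks of roughly $\log k$ elements of $S$ each, using a splitter built from interval partitions of $[m]$. Then, inside each block, enumerate perfect hash families of size $e^{\log k}\poly(\log k)$ by brute force over the tiny universe, and combine the blocks by taking products. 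The cost of choosing the interval partitions, $\binom{m}{\ell'} = k^{O(k/\log k)}$, and of the product over the blocks contributes the $k^{O(\log k)}$ overhead. The per-block factors multiply to $e^{k}$ up to that overhead. Finally I compose each $f$ from Step 2 with each $h$ from Step 1.

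The main obstacle is getting the leading $e^k$ rather than $2^{O(k)}$. A crude interval-partition argument loses a $2^{O(k)}$ factor, so the block sizes and the brute-force search must be tuned carefully so that the only exponential loss is the $k^k/k!$ term. If that tuning fails, I would fall back to a greedy set-cover derandomization over the $\binom{m}{k}$ target sets. By the usual greedy analysis this achieves size $e^k\cdot O(k\log m)$, at the price of a worse (though still FPT) construction time.
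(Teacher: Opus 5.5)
\textbf{There is a genuine gap, in both steps.} The paper gives no proof of this lemma; it imports it from Naor, Schulman and Srinivasan. Your outline is broadly their architecture: shrink the universe, split $S$ with an interval splitter, handle the blocks by greedy or exhaustive search, take products, compose. Your remark that an $(n,k,k)$ splitter is exactly a $k$-perfect hash family is also right.

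The main problem is that the parameters you pick in Step~2 cannot give $e^k k^{O(\log k)}$. The ``careful tuning'' you postpone is precisely the content of the lemma.

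\begin{itemize}
\item With $\ell'=k/\log k$ blocks, the number of interval partitions is $\binom{m}{\ell'}=k^{\Theta(k/\log k)}=2^{\Theta(k)}$, which is about $2^{k}$ already for $m=k^2$. This is not $k^{O(\log k)}$, and this factor alone turns $e^k$ into roughly $(2e)^k$.
\item Independently, the product over $k/\log k$ blocks of the per-block overhead $\poly(\log k)\cdot\log m$ is $2^{\Theta(k\log\log k/\log k)}$, again not $k^{O(\log k)}$.
\item ``Brute force over the tiny universe'' is not available either: the blocks are intervals of $[m]$, so only the sets are small, not the universe.
\end{itemize}

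The choice that works is the reverse: $\ell=\lceil\log k\rceil$ blocks, each receiving about $b=k/\log k$ elements of $S$.
\begin{itemize}
\item There are then only $\binom{m+\ell}{\ell}=k^{O(\log k)}$ interval partitions.
\item A perfect hash family for $b$-subsets of all of $[m]$, of size $e^b\cdot O(b\log m)$, is found by greedy set cover with conditional expectations over all $\binom{m}{b}$ targets. It is reused, shifted, on every block.
\item The product over $\log k$ blocks is $e^k(k\log k)^{O(\log k)}$.
\item The construction time is $e^b\binom{m}{b}\poly(k)=2^{(1+o(1))k}$, which is $o(e^k)$ provided $m=k^{2+o(1)}$. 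This is where $2<e$ is used, and why the first stage must end in a universe of size about $k^2$. You can arrange this by finishing Step~1 with an FKS step $x\mapsto(ax\bmod p)\bmod k^2$, with $p>m$ prime and $a\in[p-1]$.
\end{itemize}

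Your fallback, greedy over all $\binom{m}{k}$ $k$-sets, has the right size. However, it takes at least $\binom{k^2}{k}=k^{\Theta(k)}$ time, so it does not give the stated running time.

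Step~1 also fails as written. A second round of $x\bmod p$ on the universe $[O(k^2\log n)]$ only brings the range down to about $k^2\log(k\log n)$, which still grows with $n$. By your own counts, the two families multiply to $O(k^4\log n\,(\log k+\log\log n))$, not $k^{O(1)}\log n$. Each further round only iterates the logarithm while costing another $\poly(k)$ factor. So no fixed number of rounds reaches range $\poly(k)$ for all $n$, and an unbounded number breaks the size bound.

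What is needed is an $(n,k,k^{O(1)})$ splitter of size $k^{O(1)}\log n$; NSS give an $(n,k,k^2)$ one. One way to get it is a code $E\colon[n]\to[q]^L$ with $q=\poly(k)$ and $L=k^{O(1)}\log n$, in which any two codewords agree on fewer than $L/\binom{k}{2}$ coordinates. For example, use Reed--Solomon concatenated with an inner code found by conditional expectations. The $L$ coordinate maps $x\mapsto E(x)_i$ then form the splitter, since the $\binom{k}{2}$ pairs inside $S$ together spoil fewer than $L$ coordinates.
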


Equipped with splitters, we can finally present the main result of this section.

\begin{theorem}\label{thm:fpt_k_deltac}
    \genthiele is FPT parameterized by $k+\Delta_C$.
\end{theorem}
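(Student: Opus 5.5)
We reduce \genthiele to weighted $p$-partial set cover with $p \le k\Delta_C$ and handle the possibility of covering a voter several times with color coding on approvals, derandomized by splitters. An optimal committee $W$ of size $k$ induces at most $p \defeq k\Delta_C$ approval incidences $(v,c)$ with $c \in W$, $v \in V_c$. For each voter $v$, the incidences that $W$ places on $v$ are ordered $1,\dots,|A_v\cap W|$, and the $i$-th contributes $w_i^v$ to the score. We therefore build a universe $U \defeq \{(v,i) : v \in V,\ i \in [k]\}$, where element $(v,i)$ means ``voter $v$ is covered for the $i$-th time'', with value $w^v_i$. The difficulty is that a single candidate covers exactly one of the elements $(v,\cdot)$ for each supporter $v$, and which index it takes depends on the rest of the committee. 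This is what the coloring resolves.

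\textbf{Coloring.} We take a coloring $f$ of candidates into $[k]$; equivalently, each solution candidate receives a distinct label. Using the $(m,k,k)$ splitter of \Cref{lem:splitters}, some $f$ in the family is injective on $W$. We additionally guess a permutation of colors $\pi$, i.e., an order in which the candidates of $W$ are considered, at a cost of $k!$. This ordering determines the index of each incidence: when candidates are processed in color order, the candidate of color $j$ covers $(v, 1+\#\{\text{earlier chosen supporters-candidates of } v\})$. To make that count local, we color approvals rather than candidates: for each candidate $c$ with color $j$, we guess for each supporter $v \in V_c$ its index $i_{c,v}\in[k]$. Equivalently, we create one set $S_{c,\sigma} = \{(v,\sigma(v)) : v\in V_c\}$ per candidate $c$ and per assignment $\sigma\colon V_c\to[k]$, with weight $\sum_v w^v_{\sigma(v)}$. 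There are at most $k^{\Delta_C}$ assignments per candidate, and we restrict $S_{c,\sigma}$ to candidates of color $j$, tagging elements with $j$ to enforce one set per color.

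\textbf{Solving.} We then want $k$ pairwise disjoint sets, one per color, covering exactly $p' \le k\Delta_C$ elements, so that every voter's covered indices form a prefix $\{1,\dots,t_v\}$, and we maximize total weight. Disjointness with the prefix property guarantees that the claimed indices are consistent, so the weight equals $\score_w$ of the chosen committee, since each voter receiving $t_v$ approvals gets $\sum_{i\le t_v} w_i^v$. Conversely, the optimum $W$ with its true index assignment is feasible for the good $f$. Finding a disjoint packing of bounded total size $p'$ is handled by Bläser's algorithm (\Cref{thm:pSetCover}), with weights converted to costs by $\text{cost}=M-\text{weight}$ and exactly $k$ sets. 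Alternatively, we run a direct $2^{O(p)}$ color-coding dynamic program over subsets of the $p'$ element-slots, processing colors one by one and keeping, for each subset of used slots, the best weight. The prefix condition can be enforced by guessing the multiset of $t_v$ profile sizes, i.e., a partition of $p'$ into at most $k\Delta_C$ parts, which gives $f(k,\Delta_C)$ guesses; the required elements are then matched via a second splitter on voters.

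\textbf{Main obstacle.} The expected crux is guaranteeing that the index labels chosen independently by different candidates are globally consistent, with no duplicate $(v,i)$ and no gaps, without tracking voter identities, since $n$ is unbounded. I would first try to argue that gaps are harmless: because $w^v$ is non-increasing, any disjoint selection with gaps has weight at most the true score of the same committee, since relabelling to a prefix only increases weight. Under this argument disjointness of the sets alone suffices, and the prefix guess can be dropped. What remains is a maximum-weight disjoint packing of $k$ sets, one per color, each of size at most $\Delta_C$. This is solvable by color coding on the at most $k\Delta_C$ covered elements in time $2^{O(k\Delta_C)}\cdot k^{\Delta_C}\cdot\poly$, which gives FPT in $k+\Delta_C$.
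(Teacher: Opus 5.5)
Your argument is correct once it is reduced to its last paragraph, together with the converse direction stated under ``Solving''. It takes a genuinely different route from the paper.

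\textbf{The paper's route.} The paper colors the \emph{approvals} (the $1$s of the matrix) with $k\Delta_C$ colors via a splitter and tries all permutations of color classes. The coloring itself then prescribes which Thiele index each approval of an optimal committee receives. The paper simulates the contribution of index $c$ by $k!/c$ replicated copies of the voter. It then calls Bl\"aser's partial set cover algorithm (\Cref{thm:pSetCover}) with a binary search on the coverage threshold.

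\textbf{Your route.} You let the algorithm \emph{choose} the labelling by enumerating all $\sigma\colon V_c\to[k]$ for each candidate. You color the \emph{candidates} with a splitter so that exactly $k$ distinct members are picked. You then use that each $w^v$ is non-increasing: any disjoint selection has weight at most $\score_w$ of its committee, since $t_v$ distinct indices in $[k]$ can only undercount $\sum_{i\le t_v} w^v_i$. Conversely, the optimum with its true labelling is feasible with weight $\text{OPT}$. This removes any need for prefix, consistency, or permutation guesses. What remains is a maximum-weight disjoint packing of total size at most $k\Delta_C$, which the standard color-coding dynamic program over subsets of element colors solves. The same undercounting principle underlies the paper's claim that an optimal cover must use each color class exactly once per row.

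\textbf{What each approach buys.} Your route works for arbitrary real, voter-specific non-increasing weights. The paper's encoding of index $c$ by $k!/c$ copies is tailored to \pav-like weights; general weights would need a common bounded denominator. Your route also runs in $2^{O(k\Delta_C)}\cdot\poly(n,m)$ instead of $2^{O(k^4\cdot k!\,\Delta_C\log(k!\Delta_C))}\cdot n m^2$. The paper's route, in exchange, delegates the combinatorial core to an off-the-shelf partial-cover solver.

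\textbf{Detours to drop.} Do not rely on the alternatives in the ``Solving'' paragraph. Bl\"aser's algorithm with costs $M-\text{weight}$ does not enforce disjointness or exactly one set per color. It also does not prevent taking the same candidate twice with two different labellings $\sigma$. So it does not solve your packing problem as stated. The permutation $\pi$, the prefix guess, and the second splitter on voters all become unnecessary once the monotonicity argument is in place.
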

\begin{proof}
    The main idea of this proof is to reduce \genthiele to $p$-partial set cover such that $p$ is bounded by $k$ and $\Delta_C$. Observe that any $k$ candidates can satisfy at most $k \cdot \Delta_C$ different voters. As the function is bounded by the parameter, we can guess the correct coverage. This already resembles the $p$-partial set cover problem where voters are elements, candidates are sets of voters that approve them. However, it does not capture that is adds to the objective function if we select two candidates that cover the same voters. We simulate this by using color coding to color all approvals. Each color indicates the additive contribution to the valuation function. This allows us create different elements not just for every approval, but also the way it contributes to the objective function. In particular, the additive contribution to the objective function is mimnicked by introducing different quantities of covered elements.  
 
    Formally, we are given candidates $C=\{c_1, c_2, \dots, c_m\}$ and voters $V=\{v_1, v_2, \dots, v_n\}$. We visualize the approvals of voters to candidates as a matrix $A\in \{0,1\}^{n\times m}$. An entry $a_{i,j}=1$ if voter $v_i$ approves candidate $c_j$, otherwise $a_{i,j}=0$. Denote by $q$ the number of $1$s in the matrix. Observe that $q \leq m \cdot \Delta_C$.

    We color all $1$s in the matrix with $k\Delta_C$ colors $1, 2, \dots, k\Delta_C$. Each consecutive $\Delta_C$ colors belong to the same type, e.g., colors $1+\Delta_C(i-1)$ to $\Delta_C \cdot i$ belong to class type $i$. Color class $c$ states that the corresponding $1$s contribute $1/c$ to the objective function. As there are at most $k\Delta_C$ approvals in any solution with $k$ candidates, we can compute an $(q,k\Delta,k\Delta)$ splitter that contains are a coloring that colors all $1$s of an optimal solution (see Lemma~\ref{lem:splitters}) differently. To guarantee the right meaning of the color, we further try all permutations of color classes for each coloring. 

   Assume we are given a correct coloring, that is, an optimal solution is colored correctly. We use this coloring and the approval matrix $A$ to compute a matrix $A' \in \{0,1\}^{2k!n \times n}$ that splits voters according to the color of their approvals. Each column $a_{\ast, j}$ of $A$ defines column $a'_{\ast, j}$ of $A'$: the entries $1+2k!(i-1)$ to $2k!(i)$ entries of $a'_{\ast, j}$ belong to the $i$th entry of $a'_{\ast, j}$: If $a'_{\ast, j} = 0$, all $2k!$ entries are 0. Otherwise, if $a'_{\ast, j}=1$ and $1$ is of color class $c$, then we put a $1$ in all entries from $1+\sum_{\ell=2}^c k!/c$ to $\sum_{\ell=2}^c k!/c + k!/c$, all other entries are 0. 

    We can now describe the $p$-partial set cover instance. Every row in $A'$ corresponds to an element, hence $U$ has size $2k!n$. Every column $a'_{\ast, j}$ corresponds to a set $S_j$, where a $1$ in the corresponding row indicates that the element is in $S_j$. Hence, there are $m$ many sets in $\mathcal{S}$. 
    
    Regarding $q$, we cover at least $k$ many elements in any solution of size $k$ (if there is a candidate that has zero approvals, we can delete it in a preprocessing step), and at most $k\cdot k!\Delta_C$. Using binary search for the largest value within these bounds that does not use more than $k$ elements in a solution, we directly get our optimal solution by taking the corresponding candidates.

    \textit{Correctness.} If there exists an optimal solution of value OPT to \genthiele, then there exists an optimal solution to the $p$-partial set cover instance of value OPT $\cdot 2k!$. Let $C^{\textrm{OPT}} \subseteq C$ be the committee of an optimal solution. Assume that we are given a coloring (which we are given with high probability) such that all $1$s in $A$ are colored correctly, e.g., each row with $\ell$ approvals of the submatrix $A^{\textrm{OPT}}$ restricted to the columns corresponding to $C^{\textrm{OPT}}$ uses each colorclass $1, 2, \dots, k$ exactly once. Then one can see that the set $S^{\textrm{OPT}}$ corresponding to $C^{\textrm{OPT}}$ will give the desired solution.

    Now assume that we are given an optimal solution over all colorings of value OPT for $p$-partial cover that uses at most $k$ sets $S^{\textrm{OPT}}$. In the submatrix $A'^{\textrm{OPT}}$ restricted to the sets $\mathcal{S}^{\textrm{OPT}}$ of the optimal solution, each row must use  each color $1, 2, \dots, k$ exactly once as otherwise there would be a better coloring achieving a higher value for any even for the same set. Now, this corresponds to the optimal solution $C^{\textrm{OPT}}$ with value OPT$/2k!$. If there would be a better solution to \genthiele, it would immediately imply a better solution to the $p$-partial set cover instance as outlined above, which is a contradiction. 
    
    \textit{Running time.} Coloring all $1$s takes time $O(mn)$. Computing $A'$ takes time $2k!nm$. Computing the splitter takes time $e^kk^{O(\log(k))}q\log(q)$. Using Theorem~\ref{thm:pSetCover}, it takes time $2^{O(k\cdot k!\Delta_C)}\cdot 2k!n \cdot m$ for each of the $e^kk^{O(\log(k))}\log(n)$ guesses and $(\Delta_C k)^k$ permutations of the color classes.
    This gives an overall running time of
    \[
        2^{O(k^4\cdot k!\Delta_C \log(k!\Delta_C))} \cdot n \cdot m^2.\qedhere
    \]
\end{proof}

This yields the following FPT algorithm.

\begin{proposition}\label{prop:thiele_fpt_d}
   For every Thiele sequence $w$, \wthiele can be solved in time $2^{d^{O(d)}} \cdot nm^{O(1)}$, hence it is FPT parameterized by $d$, the total score of an optimal solution.
\end{proposition}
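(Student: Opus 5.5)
The plan is to reduce to \Cref{thm:fpt_k_deltac}: bound both $k$ and $\Delta_C$ by a function of $d$, after a preprocessing step that handles the case where $k$ is large.

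\textbf{Bounding $\Delta_C$.} Since $w_1 = 1$, any committee containing a candidate $c$ has total score at least $|V_c|$, because each supporter of $c$ gets at least $w_1 = 1$.
\begin{itemize}
  \item If some candidate has $|V_c| > d$, the optimal score exceeds $d$. So when testing whether the optimum is at most $d$ (or equals the parameter value), we may assume $\Delta_C \le d$.
  \item Otherwise we immediately answer that the score exceeds $d$, or, in the decision version, answer yes.
\end{itemize}

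\textbf{Bounding $k$.} Remove candidates with no supporters; they contribute nothing and can be used as fillers at the end.
\begin{itemize}
  \item Every chosen approved candidate gives at least one voter an approved member. So if $k$ approved candidates with pairwise disjoint supporter sets exist, the score is at least $k$.
  \item More carefully: take a maximal family of candidates with pairwise disjoint supporter sets. If it has at least $d+1$ members, choosing $d+1$ of them (plus any fillers) already yields score greater than $d$.
  \item Otherwise every candidate intersects the union $U$ of at most $d$ sets, each of size at most $d$, so $|U| \le d^2$.
  \item The difficulty is that $k$ can still be huge while the optimum is small, e.g.\ with many candidates approved by the same single voter under \cc.
\end{itemize}

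Here the structure is used as follows. Each voter $v$ in an optimal committee $W$ contributes $\sum_{i \le |A_v \cap W|} w_i \ge 1$ if approved. Hence at most $d$ voters are covered, and $W$ lives on a voter set $V_W$ with $|V_W| \le d$. Then:
\begin{itemize}
  \item Group candidates by their type $V_c \subseteq V_W$. Only the number of candidates chosen per type matters.
  \item If $k$ is large relative to $d$, the optimal value is determined by an allocation over at most $2^d$ types. One could then argue directly, but I would rather keep the reduction.
  \item I would therefore replace the instance by guessing the set $V_W$ via color coding on voters ($d$ colors). Within a colorful guess, only candidates whose supporters lie in the guessed set matter.
  \item The number of types is at most $2^d$. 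For each type we keep $\min(k, \text{count})$ copies, and solve by enumerating multiplicities, each in $\{0,\dots,k\}$, but truncated where the score stabilizes.
\end{itemize}

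\textbf{Main obstacle.} The hardest step is exactly this large-$k$ regime. When $k > d$, some selected candidates must share supporters. Because the score is at most $d$ while each voter $v$ with $j$ approvals gets $\sum_{i\le j} w_i$, there are only boundedly many relevant patterns.

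The first thing I would try is to show that if $k$ exceeds $d \cdot 2^{d}$, one can instead solve the instance with committee size $k' = \min(k, d^{O(d)})$ and pad with dummy candidates. The justification would be an exchange argument: beyond some multiplicity, extra candidates of a type whose voters all lie in $V_W$ change the score by an amount that does not affect which committee is optimal.

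If that padding argument fails for general $w$ (where tiny positive $w_i$ keep increasing the score), I would fall back on the observation that $d \ge k \cdot \min$-gain is not available. In that case I would instead use the theorem with $k$ and $\Delta_C \le d$ only after showing $k \le d^{O(d)}$ is forced by $\score \le d$ for nontrivial instances.

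Plugging $k, \Delta_C \le d^{O(d)}$ into the running time of \Cref{thm:fpt_k_deltac} gives the claimed $2^{d^{O(d)}} \cdot n m^{O(1)}$ bound.
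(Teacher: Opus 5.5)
Your bound $\Delta_C\le d$ is right, but the large-$k$ regime is a genuine gap, and the fallback you name cannot be carried out, because $k$ is not bounded by any function of $d$. Take \cc, or any $w$ with $w_1=1$ and $w_i=2^{-i}$ for $i\ge 2$, and a single voter approving all $m$ candidates. Then $d<3/2$ for every $k\le m$.

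The padding idea fails for exactly the reason you suspect. Take two voters, each approving its own pool of at least $k$ candidates, and $w_i=2^{-i}$ for $i\ge 2$. The optimum takes about $k/2$ candidates from each pool, and any committee using only $k'<k$ real candidates plus dummies scores strictly less. So even the decision version would be answered wrongly.

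Enumerating multiplicities in $\{0,\dots,k\}$ for $2^d$ types costs $(k+1)^{2^d}$, which is only XP. Since the score never stabilizes, there is nothing to truncate at.

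Even a bound $k\le d^{O(d)}$ would not suffice. The running time of \Cref{thm:fpt_k_deltac} has $k!$ in the exponent, so you would get $2^{2^{d^{O(d)}}}$, not $2^{d^{O(d)}}$. The maximal disjoint family does not help either: it bounds $|U|$, not $k$ or the number of voters.

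The missing idea is a dichotomy that bounds either $k$ or the \emph{entire} voter set, rather than $V_W$. Discard voters with empty ballots and let $n'$ be the number of remaining voters. While fewer than $k$ candidates are chosen and some voter is not yet represented, add a candidate that voter approves; then fill up arbitrarily. Since $w_1=1$, this committee scores at least $\min(k,n')$, hence $\min(k,n')\le d$.

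If $k\le d$, then $k+\Delta_C\le 2d$, and \Cref{thm:fpt_k_deltac} runs in $2^{O(d^5\cdot d!\log(d!\,d))}\cdot n\cdot m^2=2^{d^{O(d)}}\cdot nm^{O(1)}$.

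If $n'\le d$, the whole instance lives on at most $d$ voters. You can then invoke the known FPT algorithm parameterized by the number of voters \cite{FaliszewskiSST18_scw}. It is a mixed ILP with one integer variable per candidate type (at most $2^{d}$ of them), solved by Lenstra-type fixed-dimension methods in $2^{2^{O(d)}}\cdot m^{O(1)}$ time, however large $k$ is. This is what your type grouping was reaching for. Fixed-dimension integer programming does not care how large the multiplicities are, so neither color coding on voters nor truncation is needed. This two-case argument is the paper's proof.
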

\begin{proof}
    In the case $n < k$, we can find a committee that has at least one representative of every voter (e.g., by using a greedy algorithm).
    Since $w_1 = 1$, we have $n \leq d$.
    As \wthiele is FPT with respect to~$n$ \cite{FaliszewskiSST18_scw}, in this case their algorithm is also FPT with respect to~$d$ with the running time at most $O^*(2^{2^{O(d)}}) \leq 2^{2^{O(d)}} \cdot (dm)^{O(1)}$.
    
    In the case $n \geq k$, using a greedy algorithm, we can find a committee that represents at least $k$ voters (every taken candidate covers at least one additional voter or it already represents $k$ voters).
    Since $w_1 = 1$, we have $k \leq d$.
    By taking candidate $c$ to the committee with the highest $|V_c|$, it has to be $\Delta_C \leq d$.
    Therefore, we have $k+\Delta_C \leq 2d$ and an FPT algorithm parameterized by $k+\Delta_C$ is an FPT algorithm parameterized by $d$.
    Hence, the application of \Cref{thm:fpt_k_deltac} provides an algorithm with the running time at most
    $$
    2^{O(d^5\cdot d! \log(d!d))} \cdot n \cdot m^2.
    $$
    
    The running time from both cases can be upper-bounded by
    $2^{d^{O(d)}} \cdot n \cdot m^{O(1)}$,
    which has only linear time dependency on $n$.
    This finishes the proof.
\end{proof}

As discussed in the Related Work section,
a recent result by Gupta, Jain, Saha, Saurabh and Upasana~\cite{GuptaJSSU2025_independent_fpt} also provides an FPT algorithm parameterized by $d$.
They solved Submodular Multiwinner Election with parameter $t$ that is the number of approvals in a solution. Note that $\Delta_C \leq t \leq \Delta_C \cdot k$ and that $t$ can be as large as $\Delta_C \cdot k$. Their results also use color-coding. In particular, they color all candidates with $k$ colors, and all voters with $t$ colors, i.e., every candidates of an optimal solution and all of the corresponding, at most $t$ approvals in an optimal solution are colored with a different color. It remains to now greedily take the candidates from each color that indeed covers all approvals with the desired colors. Their running time is $k^kt^{t+1}k!t!n^{O(1)}\cdot m^O(1)$ which for, e.g. for \pav, this yields $2^{2(2t)^2}\cdot t^{2t+1} \cdot n^{O(1)}\cdot m^{O(1)}$. 
In comparison to our algorithm, our running time in terms of $\Delta_C$ and $n$ are better, in particular the dependency on $\Delta_C$ is bounded by a factor of $2^{\Delta_C\log(\Delta_C)}$ and our algorithm runs in time truly linear in $n$. We would also like to highlight that while both algorithms use color coding at its heart, the approaches differ significantly. While they color candidates and voters, we just color the approvals of each voter. This results in different amount, kind, and meaning of guesses and construction of the overall solution.

\section{Conclusion and Future Work}

We presented new algorithms for computing optimal committees under Thiele rules.
We identified structural properties of optimal solutions,
which on Voter Interval instances enable a dynamic programming approach over a chain of subsets of candidates.
We also resolved an open problem by showing that winner determination under any Thiele rule is polynomial-time solvable when each candidate is approved by at most two voters, using an ILP-based approach.
Furthermore, we provided an FPT algorithm parameterized by $k+\Delta_C$ using color-coding technique, which we apply to obtain an FPT algorithm parameterized by the total score $d$.

One question raised in the conference version of this paper~\cite{LassotaS2026_pavvi} was whether winner determination under PAV on Voter Interval instances is polynomial-time solvable or NP-hard. This question has since been resolved by two independent groups, each of which obtained a polynomial-time algorithm~\cite{ManurangsiS26_thiele_vi_arxiv,AvramidisLSV26_arxiv}.

Nevertheless, several directions remain open. In particular, it remains unknown whether the ILP-based result for $\Delta_C = 2$ can be replaced by a purely combinatorial algorithm---a question that also arises in the context of Thiele rules on Candidate Interval profiles~\cite{Peters18_aaai}.

It is natural to ask whether \Cref{thm:fpt-deltac-deltav} extends to more general preference domains such as
Voter-Candidate Interval \cite{DongBWBE25_aamas25,ElkindFIMSS24_teac24,GodziszewskiB0F21_vci_aaai21}.
In contrast to \cite{DongBWBE25_aamas25},
whose objective allows discarding dominated candidates,
dominated candidates crucially affect Thiele scores.

Finally, our structural results and algorithms could be used to reason about tied committees,
for example by analyzing possible and necessary winners under Thiele rules,
since our methods that compute optimal scores and can handle preselected candidates (by adjusting Thiele sequences in \genthiele).

\section*{Acknowledgements}
We thank Andrei Constantinescu and Piotr Faliszewski for discussions of our results that helped us clarify and improve their presentation.
We also thank the anonymous reviewers for their valuable feedback.

Alexandra Lassota was supported by the Dutch Research Council (NWO) under project number VI.Veni.242.293.
Krzysztof Sornat was supported by the European Research Council (ERC) under the European Union’s Horizon 2020 research and innovation programme (grant agreement No 101002854).

\bibliographystyle{alpha}
\bibliography{bib}

@article{AvramidisLSV26_arxiv,
  author       = {Dimitris Avramidis and
                  Alexandra Lassota and
                  Ulrike Schmidt{-}Kraepelin and
                  Adrian Vetta},
  title        = {Computing {T}hiele Rules on Interval Elections and their Generalizations},
  journal      = {CoRR},
  volume       = {abs/2605.03067},
  year         = {2026}
}

@inproceedings{AzizGGMMW15,
  author       = {Haris Aziz and
                  Serge Gaspers and
                  Joachim Gudmundsson and
                  Simon Mackenzie and
                  Nicholas Mattei and
                  Toby Walsh},
  title        = {Computational Aspects of Multi-Winner Approval Voting},
  booktitle    = {Proceedings of the 2015 International Conference on Autonomous Agents and Multiagent Systems ({AAMAS} 2015)},
  pages        = {107--115},
  year         = {2015}
}

@inproceedings{BarmanFF21_concave_coverage,
  author       = {Siddharth Barman and
                  Omar Fawzi and
                  Paul Ferm{\'{e}}},
  title        = {Tight Approximation Guarantees for Concave Coverage Problems},
  booktitle    = {Proceedings of the 38th International Symposium on Theoretical Aspects of Computer Science ({STACS} 2021)},
  pages        = {9:1--9:17},
  year         = {2021}
}

@article{BarmanFGG22_lcoverage,
  author       = {Siddharth Barman and
                  Omar Fawzi and
                  Suprovat Ghoshal and
                  Emirhan G{\"{u}}rpinar},
  title        = {Tight Approximation Bounds for Maximum Multi-Coverage},
  journal      = {Math. Program.},
  volume       = {192},
  number       = {1},
  pages        = {443--476},
  year         = {2022}
}

@article{BetzlerSU13_jair,
  author       = {Nadja Betzler and
                  Arkadii Slinko and
                  Johannes Uhlmann},
  title        = {On the Computation of Fully Proportional Representation},
  journal      = {J. Artif. Intell. Res.},
  volume       = {47},
  pages        = {475--519},
  year         = {2013}
}

@article{Blaser03,
  author       = {Markus Bl{\"{a}}ser},
  title        = {Computing Small Partial Coverings},
  journal      = {Inf. Process. Lett.},
  volume       = {85},
  number       = {6},
  pages        = {327--331},
  year         = {2003}
}

@inproceedings{BoehmerBCG0S24_polkadot,
  author       = {Niclas Boehmer and
                  Markus Brill and
                  Alfonso Cevallos and
                  Jonas Gehrlein and
                  Luis S{\'{a}}nchez Fern{\'{a}}ndez and
                  Ulrike Schmidt{-}Kraepelin},
  title        = {Approval-Based Committee Voting in Practice: {A} Case Study of (over-)Representation in the Polkadot Blockchain},
  booktitle    = {Proceedings of the 38th {AAAI} Conference on Artificial Intelligence ({AAAI} 2024)},
  pages        = {9519--9527},
  year         = {2024}
}

@inproceedings{BredereckF0KN20_aaai,
  author       = {Robert Bredereck and
                  Piotr Faliszewski and
                  Andrzej Kaczmarczyk and
                  Dusan Knop and
                  Rolf Niedermeier},
  title        = {Parameterized Algorithms for Finding a Collective Set of Items},
  booktitle    = {Proceedings of the Thirty-Fourth {AAAI} Conference on Artificial Intelligence ({AAAI} 2020)},
  pages        = {1838--1845},
  year         = {2020}
}

@inproceedings{BredereckFNST15_adt,
  author       = {Robert Bredereck and
                  Piotr Faliszewski and
                  Rolf Niedermeier and
                  Piotr Skowron and
                  Nimrod Talmon},
  title        = {Elections with Few Candidates: {P}rices, Weights, and Covering Problems},
  booktitle    = {Proceedings of the 4th Conference on Algorithmic Decision Theory ({ADT} 2015)},
  pages        = {414--431},
  year         = {2015}
}

@article{BredereckFNST20_tcs,
  author       = {Robert Bredereck and
                  Piotr Faliszewski and
                  Rolf Niedermeier and
                  Piotr Skowron and
                  Nimrod Talmon},
  title        = {Mixed Integer Programming with Convex/Concave Constraints:
                  {F}ixed-Parameter Tractability and Applications to Multicovering and Voting},
  journal      = {Theor. Comput. Sci.},
  volume       = {814},
  pages        = {86--105},
  year         = {2020}
}

@inproceedings{BrillP23_ejrp,
  author       = {Markus Brill and
                  Jannik Peters},
  title        = {Robust and Verifiable Proportionality Axioms for Multiwinner Voting},
  booktitle    = {Proceedings of the 24th {ACM} Conference on Economics and Computation ({EC} 2023)},
  pages        = {301},
  year         = {2023}
}

@inproceedings{ByrkaSS18_hkm,
  author       = {Jaros\l{}aw Byrka and
                  Piotr Skowron and
                  Krzysztof Sornat},
  title        = {Proportional Approval Voting, Harmonic k-Median, and Negative Association},
  booktitle    = {Proceedings of the 45th International Colloquium on Automata, Languages, and Programming ({ICALP} 2018)},
  pages        = {26:1--26:14},
  year         = {2018}
}

@article{ChamberlinCourant83,
  author       = {John R. Chamberlin and
                  Paul N. Courant},
  title        = {Representative Deliberations and Representative Decisions: {P}roportional Representation and the {B}orda Rule},
  journal      = {Am. Political Sci. Rev.},
  volume       = {77},
  issue        = {03},
  month        = {9},
  year         = {1983},
  pages        = {718--733}
}

@inproceedings{ConstantinescuE21_sc_linear,
  author       = {Andrei Costin Constantinescu and
                  Edith Elkind},
  title        = {Proportional Representation under Single-Crossing Preferences Revisited},
  booktitle    = {Proceedings of the 35th {AAAI} Conference on Artificial Intelligence ({AAAI} 2021)},
  pages        = {5286--5293},
  year         = {2021}
}

@book{CyganFKLMPPS15_fpt_book,
  author       = {Marek Cygan and
                  Fedor V. Fomin and
                  {\L{}}ukasz Kowalik and
                  Daniel Lokshtanov and
                  D{\'{a}}niel Marx and
                  Marcin Pilipczuk and
                  Micha\l{} Pilipczuk and
                  Saket Saurabh},
  title        = {Parameterized Algorithms},
  publisher    = {Springer},
  year         = {2015}
}

@inproceedings{DongBWBE25_aamas25,
  author       = {Chris Dong and
                  Martin Bullinger and
                  Tomasz W\k{a}s and
                  Larry Birnbaum and
                  Edith Elkind},
  title        = {Selecting Interlacing Committees},
  booktitle    = {Proceedings of the 24th International Conference on Autonomous Agents and Multiagent Systems ({AAMAS} 2025)},
  pages        = {630--638},
  year         = {2025}
}

@inproceedings{DudyczMMS20_ijcai,
  author       = {Szymon Dudycz and
                  Pasin Manurangsi and
                  Jan Marcinkowski and
                  Krzysztof Sornat},
  title        = {Tight Approximation for Proportional Approval Voting},
  booktitle    = {Proceedings of the 29th International Joint Conference on Artificial Intelligence ({IJCAI} 2020)},
  pages        = {276--282},
  year         = {2020}
}

@article{ElkindFIMSS24_teac24,
  author       = {Edith Elkind and
                  Piotr Faliszewski and
                  Ayumi Igarashi and
                  Pasin Manurangsi and
                  Ulrike {Schmidt{-}Kraepelin} and
                  Warut Suksompong},
  title        = {The Price of Justified Representation},
  journal      = {{ACM} Trans. Economics and Comput.},
  volume       = {12},
  number       = {3},
  pages        = {11:1--11:27},
  year         = {2024}
}

@inproceedings{ElkindL15_ijcai15,
  author       = {Edith Elkind and
                  Martin Lackner},
  title        = {Structure in Dichotomous Preferences},
  booktitle    = {Proceedings of the 24th International Joint Conference on Artificial Intelligence ({IJCAI} 2015)},
  pages        = {2019--2025},
  year         = {2015}
}

@incollection{ElkindLP2017trends_book,
  author        = {Edith Elkind and
                   Martin Lackner and
                   Dominik Peters},
  title         = {Structured Preferences},
  editor        = {Ulle Endriss},
  booktitle     = {Trends in Computational Social Choice},
  chapter       = {10},
  pages         = {187--207},
  publisher     = {AI Access},
  year          = {2017}
}

@article{ElkindLP25_arxiv_preference_restrictions,
  author       = {Edith Elkind and
                  Martin Lackner and
                  Dominik Peters},
  title        = {Preference Restrictions in Computational Social Choice: {A} Survey},
  journal      = {CoRR},
  volume       = {abs/2205.09092v2},
  year         = {2025}
}

@article{FaliszewskiHHR11,
  author       = {Piotr Faliszewski and
                  Edith Hemaspaandra and
                  Lane A. Hemaspaandra and
                  J{\"{o}}rg Rothe},
  title        = {The Shield that Never was: {S}ocieties with Single-Peaked Preferences are More Open to Manipulation and Control},
  journal      = {Inf. Comput.},
  volume       = {209},
  number       = {2},
  pages        = {89--107},
  year         = {2011}
}

@article{FaliszewskiSST18_scw,
  author       = {Piotr Faliszewski and
                  Piotr Skowron and
                  Arkadii Slinko and
                  Nimrod Talmon},
  title        = {Multiwinner Analogues of the Plurality Rule: {A}xiomatic and Algorithmic Perspectives},
  journal      = {Soc. Choice Welf.},
  volume       = {51},
  number       = {3},
  pages        = {513--550},
  year         = {2018}
}

@inproceedings{GodziszewskiB0F21_vci_aaai21,
  author       = {Micha\l{} Tomasz Godziszewski and
                  Pawe\l{} Batko and
                  Piotr Skowron and
                  Piotr Faliszewski},
  title        = {An Analysis of Approval-Based Committee Rules for {2D}-{E}uclidean Elections},
  booktitle    = {Proceedings of the 35th {AAAI} Conference on Artificial Intelligence ({AAAI} 2021)},
  pages        = {5448--5455},
  year         = {2021}
}

@inproceedings{GuptaJSSU2025_independent_fpt,
  author       = {Sushmita Gupta and
                  Pallavi Jain and
                  Souvik Saha and
                  Saket Saurabh and
                  Anannya Upasana},
  title        = {More Efforts Towards Fixed-Parameter Approximability of Multiwinner Rules},
  booktitle    = {Proceedings of the 34th International Joint Conference on Artificial Intelligence ({IJCAI} 2025)},
  pages        = {3891--3899},
  year         = {2025}
}

@inproceedings{JainST20_ijcai,
  author       = {Pallavi Jain and
                  Krzysztof Sornat and
                  Nimrod Talmon},
  title        = {Participatory Budgeting with Project Interactions},
  booktitle    = {Proceedings of the 29th International Joint Conference on Artificial Intelligence ({IJCAI} 2020)},
  pages        = {386--392},
  year         = {2020}
}

@article{LacknerS21,
  author       = {Martin Lackner and
                  Piotr Skowron},
  title        = {Consistent Approval-based Multi-winner Rules},
  journal      = {J. Econ. Theory},
  volume       = {192},
  pages        = {105173},
  year         = {2021}
}

@book{lackner23abc_book,
  author       = {Martin Lackner and
                  Piotr Skowron},
  title        = {Multi-Winner Voting with Approval Preferences},
  publisher    = {Springer},
  year         = {2023}
}

@inproceedings{lassota2025parameterizedalgorithmsmatchinginteger,
  author       = {Alexandra Lassota and
                  Koen Ligthart},
  title        = {Parameterized Algorithms for Matching Integer Programs with Additional Rows and Columns},
  booktitle    = {Proceedings of the 52nd International Colloquium on Automata, Languages, and Programming ({ICALP} 2025)},
  pages        = {112:1--112:18},
  year         = {2025}
}

@inproceedings{LassotaS2026_pavvi,
  author       = {Alexandra Lassota and
                  Krzysztof Sornat},
  title        = {Algorithms for Structured Elections under {T}hiele Voting Rules},
  booktitle    = {Proceedings of the 40th {AAAI} Conference on Artificial Intelligence ({AAAI} 2026)},
  pages        = {17084--17092},
  year         = {2026}
}

@inproceedings{LiuG16_mav_vi,
  author       = {Hong Liu and
                  Jiong Guo},
  title        = {Parameterized Complexity of Winner Determination in Minimax Committee Elections},
  booktitle    = {Proceedings of the 2016 International Conference on Autonomous Agents and Multiagent Systems ({AAMAS} 2016)},
  pages        = {341--349},
  year         = {2016}
}

@article{ManurangsiS26_thiele_vi_arxiv,
  author       = {Pasin Manurangsi and
                  Krzysztof Sornat},
  title        = {Polynomial-Time Algorithm for {T}hiele Voting Rules with Voter Interval
                  Preferences},
  journal      = {CoRR},
  volume       = {abs/2604.05953},
  year         = {2026}
}

@inproceedings{NaorSS95,
  author       = {Moni Naor and
                  Leonard J. Schulman and
                  Aravind Srinivasan},
  title        = {Splitters and Near-Optimal Derandomization},
  booktitle    = {Proceedings of the 36th Annual Symposium on Foundations of Computer Science ({FOCS} 1995)},
  pages        = {182--191},
  year         = {1995}
}

@inproceedings{Peters18_aaai,
  author       = {Dominik Peters},
  title        = {Single-Peakedness and Total Unimodularity:
                  {N}ew Polynomial-Time Algorithms for Multi-Winner Elections},
  booktitle    = {Proceedings of the 32nd {AAAI} Conference on Artificial Intelligence ({AAAI} 2018)},
  pages        = {1169--1176},
  year         = {2018}
}

@article{PetersL20_spoc_jair,
  author       = {Dominik Peters and
                  Martin Lackner},
  title        = {Preferences Single-Peaked on a Circle},
  journal      = {J. Artif. Intell. Res.},
  volume       = {68},
  pages        = {463--502},
  year         = {2020}
}

@book{schrijver2003combinatorial,
  author       = {Alexander Schrijver},
  title        = {Combinatorial Optimization: {P}olyhedra and Efficiency},
  series       = {Algorithms and Combinatorics},
  volume       = {24},
  publisher    = {Springer},
  year         = {2003}
}

@article{SkowronFL16_aij_set_of_items,
  author       = {Piotr Skowron and
                  Piotr Faliszewski and
                  J{\'{e}}r{\^{o}}me Lang},
  title        = {Finding a Collective Set of Items: {F}rom Proportional Multirepresentation to Group Recommendation},
  journal      = {Artif. Intell.},
  volume       = {241},
  pages        = {191--216},
  year         = {2016}
}

@inproceedings{SornatWX22_ijcai,
  author       = {Krzysztof Sornat and
                  Virginia {Vassilevska Williams} and
                  Yinzhan Xu},
  title        = {Near-Tight Algorithms for the {C}hamberlin-{C}ourant and {T}hiele Voting Rules},
  booktitle    = {Proceedings of the 31st International Joint Conference on Artificial Intelligence ({IJCAI} 2022)},
  pages        = {482--488},
  year         = {2022}
}

@incollection{Thiele95,
  author    = {Thorvald Thiele},
  booktitle = {Oversigt over det Kongelige Danske Videnskabernes Selskabs Forhandlinger (in {D}anish)},
  pages     = {415--441},
  title     = {Om Flerfoldsvalg},
  year      = {1895},
  publisher = {K{\o}benhavn: A.F. H{\o}st}
}

@article{YangW23-jaamas,
  author       = {Yongjie Yang and
                  Jianxin Wang},
  title        = {Parameterized Complexity of Multiwinner Determination:
                  {M}ore Effort Towards Fixed-Parameter Tractability},
  journal      = {Auton. Agents Multi Agent Syst.},
  volume       = {37},
  number       = {2},
  pages        = {28},
  year         = {2023}
}

@inproceedings{YangW18-aamas18,
  author       = {Yongjie Yang and
                  Jianxin Wang},
  title        = {Parameterized Complexity of Multi-Winner Determination: {M}ore Effort Towards Fixed-Parameter Tractability},
  booktitle    = {Proceedings of the 17th International Conference on Autonomous Agents and MultiAgent Systems ({AAMAS} 2018)},
  pages        = {2142--2144},
  year         = {2018}
}

\end{document}